\title{Interacting SPT phases are not Morita invariant}
\author{Luuk Stehouwer}
\date{\today}
\newcommand{\Z}{\mathbb{Z}}
\newcommand{\C}{\mathbb{C}}
\newcommand{\R}{\mathbb{R}}
\newcommand{\Hom}{\operatorname{Hom}}
\newcommand{\pt}{\operatorname{pt}}
\newcommand{\Spin}{\operatorname{Spin}}
\newcommand{\Pin}{\operatorname{Pin}}
\newcommand{\nw}{\rotatebox[origin=c]{180}{$w$}}
\newcommand{\np}{\rotatebox[origin=c]{180}{$q$}}
\newtheorem{proposition}{Proposition}
\newtheorem{theorem}[proposition]{Theorem}
\newtheorem{corollary}[proposition]{Corollary}
\newtheorem*{result}{Main Result}
\theoremstyle{definition}
\newtheorem{definition}[proposition]{Definition}
\newtheorem*{ansatz}{Ansatz}
\theoremstyle{remark}
\newtheorem{example}[proposition]{Example}
\begin{document}

\maketitle

    \begin{abstract}
    The tenfold way provides a strong organizing principle for invertible topological phases of matter.
Mathematically, it is intimately connected with $K$-theory via the fact that there exist exactly ten Morita classes of simple real superalgebras.
This connection is physically unsurprising, since weakly interacting topological phases are classified by $K$-theory.
We argue that when strong interactions are present, care has to be taken when formulating the exact ten symmetry groups present in the tenfold way table.
We study this phenomenon in the example of class D by providing two possible mathematical interpretations of a class D symmetry.
These two interpretations of class D result in Morita-equivalent but different symmetry groups.
As $K$-theory cannot distinguish Morita-equivalent protecting symmetry groups, the two approaches lead to the same classification of topological phases on the weakly interacting side.
However, we show that these two different symmetry groups yield different interacting classifications in spacetime dimension 2+1.
We use the approach to interacting topological phases using bordism groups, reducing the relevant classification problem to a spectral sequence computation.

\end{abstract}

\section{Introduction}

A symmetry-protected topological (SPT) phase is a material which stays in a robust nontrivial state at zero temperature because it is protected by a symmetry \cite{senthil2015symmetry}.
In recent years several mathematical approaches to SPT-phases of matter have been developed.
With the simplifying assumption of free particles, the classification of fermionic SPTs by $K$-theoretic methods has been well-established \cite{kitaev2009periodic} \cite{freedmoore}.
Given the prominent role $8$- and $2$-Bott periodicity play in real respectively complex $K$-theory, it is not surprising that there is a corresponding important phenomenon in free fermion SPTs.
This is called the tenfold way, which is a collection of ten classes in which physical systems with topological properties can fall.
However, there are several competing mathematical formalisms to describe the tenfold way.
For example, some formulations of the tenfold way require Hamiltonians to always be charge-conserving \cite{thiangk-theoretic}, while others allow Bogliubov-de-Gennes Hamiltonians \cite{altlandzirnbauer} \cite{kitaev2009periodic}.

In this paper, a general mathematical framework of fermionic symmetry groups $G$ is developed with the goal to describe what it means for a topological phase to be protected by an internal symmetry group $G$.
One version of the ten-fold way is then outlined as a collection of ten special fermionic groups that correspond to the ten symmetry classes in agreement with \cite{freedhopkins}.
Using the language of fermionic groups, it can be argued that there are other choices of internal symmetry groups that also realize the tenfold way in some sense.
Namely, it turns out that such different approaches to the ten-fold way give equivalent classifications of topological phases in the weakly interacting approximation.
The reason is that the $K$-theory of a symmetry group only depends on the representation theory and therefore different symmetry groups with the same representation theory yield isomorphic groups of free fermion SPT-phases.
This is called Morita invariance and justifies that we can talk about \emph{the} ten-fold way in the free setting.
Morita invariance is briefly treated in section \ref{subsec:Morita equiv}, while further mathematical details will be justified in future work \cite{stehouwertenfold}. 
However, Bott periodicity is a $K$-theoretic phenomenon and SPT-phases involving strongly interacting particles are no longer expected to be classified by $K$-theory.
Therefore, it is not too surprising that one has to be more careful implementing the tenfold way in the strongly interacting setting.
In particular, it is essential for computations with strong interactions to know the exact symmetry groups of the system in the ten cases.

However, for non-experts it can be challenging to find consensus in the literature about what the symmetry groups are of the ten cases and therefore some caution is needed when talking about `the symmetry group of class X' for interacting topological phases.
We illustrate this issue by focusing on class D topological superconductors.
The standard way to realize class D is by requiring no symmetries at all and so in particular no charge conservation, see for example the original work of Altland-Zirnbauer \cite{altlandzirnbauer} and Kitaev \cite{kitaev2009periodic}. 
This corresponds to the choice of fermionic internal symmetry group $\Z_2^F = \{1, (-1)^F\}$.
Another description is that class D SPTs have a particle-hole symmetry $C$ which is complex antilinear on one particle Hilbert space and satisfies $C^2 = 1$.
As emphasized in \cite{zirnbauerparticlehole}, there seems to be no agreement in the literature on the definition of a particle-hole symmetry.
Therefore to interpret this other description, it is essential to clarify its possible definitions.
Some authors work on the space of Nambu spinors and use the term `particle-hole symmetry' to refer to the canonical correspondence between particles and holes on Nambu space referred to as `particle-hole conjugation' in loc. cit. 
In that case, a condensed matter theorist saying that there is a particle-hole symmetry is equivalent to a high-energy physicist saying that there is no symmetry.
Indeed, this particle-hole conjugation is always present (also for class A for example) and is not a symmetry in the sense that there is no canonical way to second quantize it to non-trivial symmetry on Fock space.
Another interpretation of the statement that class D has a particle-hole symmetry $C$, is to use the definition of particle-hole symmetry of loc. cit. where it is defined as a symmetry that anti-commutes with a charge symmetry $Q$.
This results in a different internal symmetry group which we write $U(1)_Q \rtimes \Z_2^C$.
In the weakly interacting setting, it can be argued that `particle-hole symmetry cancels charge' in a suitable sense.
Namely, the classification results are independent of which of the two internal symmetry groups are taken because of a Morita equivalence.
We will elaborate on these statements and provide clear definitions in section \ref{subsec:same class}.

The main goal in the second part of this paper is to show that in the approach to interacting SPT-phases using bordism and invertible field theories of \cite{freedhopkins} that we recall in section \ref{sec: interacting}, the two symmetry groups give different classifications.
This paper therefore sheds light on the mechanism of `breaking the electromagnetic group $U(1)$ to $\Z_2 = \{\pm 1\}$ through a particle-hole symmetry', see the discussion in the first paragraph of \cite[section 9]{freedhopkins}.
We do this by computing the relevant bordism groups using techniques from algebraic topology.
The group of invertible field theories with internal symmetry $\Z_2^F$ is well-known and can be easily derived from the spin bordism groups \cite[Corollary 9.81]{freedhopkins}.
We deploy a spectral sequence to compute SPT-phases for the other symmetry group $U(1)_Q \rtimes \Z_2^C$ in low dimensions. 
We show in section \ref{sec: computing the two} that in spacetime dimension $2+1$ there are both a $\Z_2$-invariant and a $\Z$-invariant for this internal symmetry group which does not appear in the other classification.
With knowledge of this result, an explicit representative invertible topological field theory with nontrivial $\Z_2$-invariant is constructed, resulting in the partition function given in equation \ref{eq:partition function}.
Finally in section \ref{sec:explicit}, we give a spacetime on which this partition function is nontrivial.

\begin{result}
The classifications of interacting SPT-phases using bordism groups for the internal symmetry groups $\Z_2^F$ containing only fermion parity and $U(1)_Q \rtimes \Z_2^C$ containing both charge and a complex-linear particle hole symmetry with $C^2 =1$ are not equal in dimension $2+1$:
\[
SPT^3_{\Z_2^F}  = \Z \quad SPT^3_{U(1)_Q \rtimes \Z_2^C} = \Z_2 \oplus \Z^2.
\]
\end{result}

Possible future research could be extending this result to other symmetry classes which have been described by different internal symmetry groups such as class AI (see section \ref{subsec:same class}).
Another topic is the map from free to interacting SPT phases, which conjecturally is given by a certain version of the Atiyah-Patodi-Singer $\eta$-invariant \cite[conjecture 10.25]{freedlectures} \cite[remark 9.72]{freedhopkins} \cite[section 9.2.6]{freedhopkins}.
With the choice of symmetry group $\Z_2^F$, it is known to be an isomorphism \cite[Corollary 9.81]{freedhopkins}.
But other examples show different behavior, such as the time-reversal invariant Kitaev chain in dimension $1+1$ for which it is a surjection $\Z \to \Z_8$. 
In particular, for the symmetry group $U(1)_Q \rtimes \Z_2^C$, it could be some interesting homomorphism $\Z \to \Z_2 \oplus \Z^2$.

We will make one remark on terminology. 
In this paper the phrase `SPT-phase' should be understood to mean `invertible phase'.
In other words, we will not make the common assumption that the phase becomes trivial after the symmetry is broken.
A translation to the other convention is easily made by taking the kernel of the map $SPT_G \to SPT_{\Z_2^F}$ given by forgetting the internal symmetry $G$.

\subsection*{Acknowledgements}

I offer many thanks to my advisor Peter Teichner. 
Not only did he teach me about the James spectral sequence and checked my computations, many of the ideas on fermionic symmetry groups and the tenfold way of Freed-Hopkins presented here are originally due to him.
I also thank the anonymous reviewer for many useful comments and helpful discussions.
Additionally, this paper benefited from discussions with Lukas M\"uller, Bernardo Uribe and Martin Zirnbauer.
I am grateful to the Max Planck Institute for Mathematics in Bonn where this research was carried out.

\section{Fermionic Symmetries and Symmetry Classes}

\subsection{Fermionic symmetry groups}
\label{sec: fermionic symmetry}

In this section we will abstract some properties of symmetries of physical systems involving fermions to arrive at the notion of a fermionic symmetry group.
Since the work of Wigner \cite{wigner1931gruppentheorie} it has become clear that in quantum-mechanics time-reversing symmetries $g$ have to be treated differently from time-preserving symmetries; they are anti-unitary instead of unitary operators.
Also in quantum field theories in spacetime dimension $d>1$, time-reversing symmetries act anti-unitarily on the many-body Hilbert space and so their representation theory behaves differently.
Additionally, symmetry groups of fermionic systems have an important datum that one has to keep track of: the fermion number operator $(-1)^F$.
We will assume all symmetries of our fermionic system are bosonic, so they commute with $(-1)^F$.
This discussion results in the following formal mathematical definition.

\begin{definition}
\label{def: fermionic group}
A \emph{fermionic (symmetry) group} is a topological group $G$ together with a central element $(-1)^F \in G$ of square one called \emph{fermion parity} and a continuous homomorphism $\theta: G \to \Z_2 = \{0,1\}$ such that $\theta((-1)^F) = 0$.
We write $\Z_2^F \subseteq G$ for the $\Z_2$-subgroup\footnote{We adopt the physicist's convention in this paper and write sub- and superscript letters on groups to remind ourselves of the physical meaning the groups play.} generated by $(-1)^F$.
\end{definition}

The bosonic group $G_b$ of a fermionic group $G$ is defined to be $G/\Z_2^F$.
We write $G_{ev} := \ker \theta$ for the symmetries that preserve the direction of time.

\begin{example}[class DIII]
\label{example: class DIII}
The fermionic group denoted $\Z_4^{T,F}$ is the group $\Z_4$ generated by an element denoted $T$ (time-reversal).
We take both $\theta$ and $(-1)^F$ to be nontrivial, so that $\theta(T) = 1$ is time-reversing and $T^2 = (-1)^F$.
\end{example}

\begin{example}
\label{example: Pin}
Let $G = \Pin^-(d)$ denote the $\Pin$-group; the nontrivial double cover of $O(d)$ in which lifts of reflections in $O(d)$ have negative squares.
It has a canonical square one central element $(-1)^F:= -1 \in \Pin^-(d)$.
The map $\Pin^-(d) \to \Z_2$ labeling the two path components of $\Pin^-(d)$ defines $\theta$.
It is given by the composition
\[
\Pin^-(d) \to O(d) \to \Z_2,
\]
where the first map is the double cover and the second map is zero if $\det A = 1$ and one if $\det A = -1$.
Taking $d = 1$ recovers the last example. 
The above discussion generalized straightforwardly to $G = \Pin^+(d)$.
\end{example}

\begin{example}[class AII]
\label{example: class AII}
We lay out one possible physical interpretation of the last example for $d = 2$ as the internal symmetry group of class AII topological insulators.
Pick an orthonormal basis $\{\gamma_1,\gamma_2\}$ of $\R^2$ and also denote by $\gamma_1,\gamma_2 \in \Pin^-(2)$ lifts of reflections along these vectors.
Then there is an isomorphism of topological groups $\Pin^-(2) \cong \frac{U(1)_Q \rtimes \Z_4^{TF}}{\Z_2^F}$, where $\Z_4$ acts on $U(1)$ by complex conjugation after the quotient map to $\Z_2$ and we quotiented out the diagonal $\Z_2$-subgroup.
The isomorphism is given by $T := \gamma_1,$ $iQ := \gamma_1 \gamma_2$.
More precisely, the $U(1)$ subgroup consists of elements of the form $e^{ib Q} = \cos b + \gamma_1 \gamma_2 \sin b$ for $0 \leq b < 2 \pi$.
Physically, we can interpret this symmetry group to consist of time-reversal $T$ of square $(-1)^F$ and charge $Q$ such that the spin-charge relation holds, by which we mean the following.
Focus on $a = \pi$, then since we quotiented out $\Z_2^F$, the group element $e^{i\pi Q}$ becomes fermion parity.
Now consider a particle of charge $n \in \Z$; a vector $v$ inside an irreducible representation $V$ of $U(1)_Q$ with winding number $n$, equivalently $Q v = n v$.
Then $n$ is odd if and only if $v$ is fermionic, i.e. $(-1)^F v = -v$.
\end{example}

Note that fermionic internal symmetry groups in these examples are strictly speaking only spatially internal in the sense that they act trivially on a spatial slice, but not on spacetime.
However, such groups seem to be called internal symmetry groups in the physics literature nonetheless and so we will adopt this language.

\subsection{Same class, different symmetry groups}
\label{subsec:same class}

In this section we address the fact that some symmetry classes in the tenfold way table have different realizations in different frameworks.
The main example in this paper is the case of class D.
In most references, a topological superconductor is described as having `no symmetries' (more precisely, only fermion parity is present)
\cite[section II.A]{altlandzirnbauer}, \cite{kapustinfermionicSPT},\cite{freedhopkins}\cite[section 2.2]{wittenfermionic} \cite{kitaev2009periodic}.
In particular, the Hamiltonians involved are not necessarily charge-conserving.
In other words the internal fermionic symmetry group is $\Z_2^F$.
However, other references (mainly in the mathematical physics community) instead claim that the protecting symmetry group contains both a charge symmetry $Q$ and a particle-hole symmetry $C$ that acts anti-unitarily on one particle Hilbert space \cite{freedmoore} \cite{thiangk-theoretic}.
Such references are often motivated to make the tenfold way table of Schnyder-Ryu-Furusaki-Ludwig \cite[Table 1]{ryutenfoldsurvey} into a theorem.
It might be tempting to interpret class D in this table as assuming an additional particle-hole symmetry on top of the charge symmetry $Q$ in class A.
However, class D is described using BdG Hamiltonians acting on Nambu space in their paper, while class A is described by one-particle Hamiltonians acting on one-particle Hilbert space.
Therefore, classes A and D should not be compared in this way.

To make this discussion concrete, we formulate the subtle difference between these two approaches in the language of Bogliubov-de-Gennes Hamiltonians.
The starting point is a Nambu space $W$ of not-necessarily-charged fields, which is a complex vector space equipped with two structures:
\begin{enumerate}
    \item a nondegenerate symmetric complex-bilinear form $\{.,.\}: W \times W \to \C$ giving the canonical anti-commutation relations;
    \item a complex antilinear involution $\Xi: W \to W$ preserving the nondegenerate bilinear form given by exchanging particles and holes. 
    Following \cite{zirnbauerparticlehole} we will call this \emph{particle-hole conjugation} and remark that some others refer to $\Xi$ as a `particle-hole symmetry', a term that has a different definition in this article, see below.
\end{enumerate}
There is an induced Hilbert space structure on $W$ given by $\langle w_1,w_2 \rangle = \{\Xi(w_1),w_2\}$.
A free BdG Hamiltonian is then a self-adjoint operator $H$ on $W$ which anti-commutes with $\Xi$.
A $G$-symmetry\footnote{For this discussion, we will restrict ourselves to time-preserving symmetries and so $G$ will be a fermionic group with trivial $\theta$.} is a representation $R$ of $G$ on $W$ which commutes with $H$ and $\Xi$ and preserves $\{.,.\}$.
Since fermion parity $(-1)^F \in G$ should act by $-1$ on single fermions we also require it to acts by $-1$ on $W$.

The situation most familiar to condensed matter theorists, is where there is a unit charge\footnote{A representation $R$ of $U(1)_Q$ is called of unit charge if only representations of weight $\pm 1$ occur in the decomposition into irreducibles, i.e. $R(e^{biQ}) = \cos b + R(iQ) \sin b$.} representation of $U(1)_Q$ on $W$ which commutes with $\Xi$ and preserves $\{.,.\}$.
In that case there is a splitting of Nambu spinors $W \cong V \oplus V^*$ into charge $Q = \pm 1$ eigenspaces for a one particle Hilbert space $V$ of electrons.
We can then interpret $W$ as consisting of creation operators $a_v^\dagger$ creating a particle $v \in V$ and annihilation operators $a_{\phi}$ for $\phi \in V^*$.
Note that since $\Xi$ commutes with $e^{ibQ}$ and $\Xi$ is antilinear, $\Xi$ and $Q$ anticommute.
Under the isomorphism $W \cong V \oplus V^*$ we can get that $\Xi$ is induced by the Riesz-Fr\'echet isomorphism $RF: V \cong V^*$ and $\{.,.\}$ is the natural pairing between $V$ and $V^*$.
A BdG Hamiltonian becomes a self-adjoint operator on $W$ of the form
\[
H = 
\begin{pmatrix}
h & \Delta \circ RF^{-1} \\
- RF \circ \Delta & - RF \circ h \circ RF^{-1}
\end{pmatrix},
\]
where we need $h: V \to V$ self-adjoint and $\Delta: V \to V$ to satisfy $\Delta^\dagger = - \Delta$ for $H$ to be self-adjoint. 
Note that the current $U(1)_Q$-representation 
\[
e^{ibQ} = e^{ib} \oplus e^{-ib}: V \oplus V^* \to V \oplus V^*
\]
we are considering on $W$ is a $U(1)_Q$-symmetry if and only if $\Delta = 0$.
This is the charge-conserving case in which $H$ comes directly from a one-particle Hamiltonian $h: V \to V$.
However, in a superconducting system, $H$ does not have to be charge-conserving.
This is the case in which the internal symmetry group $G = \mathbb{Z}^F_2$ consists of nothing but fermion parity.

We will now discuss what happens when a particle-hole symmetry is included in addition to $U(1)_Q$.
We call a symmetry $C \in G$ a \emph{particle-hole symmetry} if it anticommutes with $Q$ \cite[definition (2.1)]{zirnbauerparticlehole}.
Since $C$ exchanges $V$ and $V^*$ we can use $\Xi$ to get a corresponding operator $\Gamma := \Xi C|_V: V \to V$ on one-particle Hilbert space that anticommutes with $h$ and again squares to one.
This is called the pseudosymmetry associated to $C$.
To get a symmetry group equivalent to class D, $\Gamma$ is required to be anti-unitary and square to one.
Equivalently, since $\Xi$ is anti-unitary, we require $C$ to be unitary and $C^2 = 1$.
Since $C$ is $\C$-linear, we get
\[
e^{biQ} C = C e^{-biQ}.
\]
The result is therefore that there is a $U(1)_Q \rtimes \Z_2^C$-symmetry where the action of $\Z_2^C = \langle C: C^2 = 1 \rangle$ on $U(1)_Q$ is by complex conjugation.
Since all group elements involved act unitarily on $W$ and hence on positive energy Fock space, the $\Z_2$-grading $\theta$ is trivial.
Finally, to make $U(1)_Q \rtimes \Z_2^C$ into a fermionic group, we have to choose the central element $(-1)^F$ of square $1$.
We will take it to be $-1 \rtimes 1 \in U(1)_Q \rtimes \Z_2^C$, thereby assuming the spin-charge relation in a similar fashion to example \ref{example: class AII}.
We will briefly discuss what changes in our results if we don't assume the spin-charge relation at the end of section \ref{sec: computing the two}.
In the next section, we will argue that the difference between the two symmetry groups $\Z_2^F$ and $U(1)_Q \rtimes \Z_2^C$ does not matter for the classification in the weakly interacting regime, but it does matter for strongly interacting phases as we will see in section \ref{sec: computing the two}.

By comparing with the case without symmetries above, we can conclude that
\begin{enumerate}
    \item a BdG system with $\Z_2^F$-symmetry consists of a complex Hilbert space $W$ together with an anti-unitary operator $\Xi: W \to W$ such that $\Xi^2 = 1$ and a self-adjoint operator $H: W \to W$ which anticommutes with $\Xi$.
    \item a BdG system with $U(1)_Q \rtimes \Z_2^C$-symmetry consists of a complex Hilbert space $V$ together with an anti-unitary operator $\Gamma: V \to V$ such that $\Gamma^2 = 1$ and a self-adjoint operator $h: V \to V$ which anticommutes with $\Gamma$.
\end{enumerate}
These two pieces of data are in canonical mathematical bijection.
Therefore, from the perspective of free fermions, it is mathematically equivalent to consider systems with $\Z_2^F$-symmetry and systems with $U(1)_Q \rtimes \Z_2^C$-symmetry.
The origin of this is an equivalence between the representation theory of $\Z_2^F$ and the representation theory of $U(1)_Q \rtimes \Z_2^C$ called a Morita equivalence, which we will discuss in more detail in the next section. 

However, this canonical mathematical bijection does not seem to have any physical interpretation, a statement on which we will now elaborate.
First observe that since $\Gamma$ and $\Xi$ are pseudosymmetries, they map positive energy modes to negative energy modes and so are strictly speaking not symmetries.
In more details, suppose $H:W \to W$ is gapped with its Fermi energy shifted to zero.
Then $W = W_+ \oplus W_-$ separates into positive and negative energy modes\footnote{In case there is also charge-conservation, this splitting is typically not related to the splitting $W \cong V \oplus V^*$. Instead $W_+ = V_+ \oplus V_-^*$, where $V_\pm$ are the $\pm$-energy modes of the one-particle Hamiltonian.}.
The BdG Hamiltonian gets second quantized to an operator $\hat{H}: \bigwedge W_+ \to \bigwedge W_+$.
It is free in the sense that it is quadratic in annihilation and creation operators.
If we have a symmetry $g: W \to W$ it second quantizes to an operator $\hat{g}: \bigwedge W_+ \to \bigwedge W_+$ which commutes with $\hat{H}$.
On the other hand, $\Xi$ and $\Gamma$ exchange $W_+$ and $W_-$ and so only define operators $\bigwedge W_{\pm} \to \bigwedge W_{\mp}$.
However, since a pseudosymmetry $\Gamma: V \to V$ has a canonical corresponding symmetry $C$, this symmetry can then be second quantized to a genuine symmetry of the second-quantized Hamiltonian $\hat{H}$ on positive energy Fock space.
Therefore, working with pseudosymmetries is physically justified.
However, applying this procedure in the case where $\Gamma := \Xi$ will lead to a trivial symmetry since $\Xi^2 = 1$.
Therefore unlike the pseudosymmetry $\Gamma$ which naturally corresponds to a genuine particle-hole symmetry $C$, the operator $\Xi$ cannot be interpreted as a symmetry of the second-quantized theory.

To the reader more familiar with the high-energy physics language, the above story can be briefly translated to the language of spinors in high energy physics as follows.
Since we are not necessarily assuming charged particles, the starting point is the real Hilbert space $\mathcal{M}$ of Majorana fields (note that we need Majoranas because we do not assume charge conservation).
We complexify this space to get a complex Hilbert space $W = \mathcal{M} \otimes_\R \C$ with canonical complex conjugation $\Xi$ and a symmetric nondegenerate complex-bilinear form
\[
\{w_1,w_2\} = \langle \Xi(w_1),w_2 \rangle.
\]

To end this section, we briefly also discuss the example of class AI.
According to some references such as \cite{freedmoore} and \cite{freedhopkins}, the internal symmetry group is $U(1)_Q \rtimes \Z_2^T$ in which now the generator $T \in \Z_2^T$ is anti-unitary.
One other way to realize class AI that seems natural from a physical perspective was already given in the original paper of Altland-Zirnbauer \cite{altlandzirnbauer}: it includes time-reversal, charge and spin.
We then arrive instead at a fermionic symmetry group which roughly
\footnote{The exact form of $G$ depends on `how discrete' we want to pick it. For example, instead of $SU(2)$ we could take the finite quaternion group.} 
looks like
\[
G = \frac{SU(2) \times U(1)_Q \rtimes \Z_4^{FT}}{\Z_2^F \times \Z_2^F},
\]
where the quotient is to ensure that the fermion parities of all three factors are identified.
In particular we chose to assume both the spin-charge and spin-statistics relations in this symmetry group.
Independent of the exact form of $G$, it is not isomorphic to the group $U(1)_Q \rtimes \Z_2^T$.

We argue in the next section that two different choices of symmetry groups can be mathematically equally good to describe some symmetry class X if their representation theory is sufficiently similar for the $K$-theory classifications to agree.
However, this reasoning breaks down for interacting phases, since these no longer depend only on the $K$-theory.
It would be interesting to work out the difference between these two approaches to class AI in the interacting setting similar to how we handle class D in section \ref{sec: computing the two}.

\subsection{Morita equivalence of internal symmetries}
\label{subsec:Morita equiv}

The two internal symmetry groups $\Z_2^F$ and $U(1)_Q \rtimes \Z_2^C$ of class D that have been proposed in the last section are different, but they are similar in the sense that they have equivalent representation theory in a suitable sense.
Mathematicians call this Morita equivalence and this turns out to be good enough to have the same classification of SPT-phases in the weakly interacting limit.
This is because $K$-theory is Morita-invariant.

More precisely, let us focus on the real representation theory of these internal symmetry groups needed to describe individual Majorana fermions with this symmetry living in some real Hilbert space $\mathcal{M}$.
Since this space describes single fermions, $(-1)^F$ acts as $-1$.
So the symmetry algebra
\[
\frac{\R[(-1)^F]}{((-1)^F = -1)} \cong \R
\]
is trivial and being a representation of $\R$ requires no extra data on $\mathcal{M}$.

Now assume the Majorana additionally furnishes a representation of $U(1)_Q \rtimes \Z_2^C$ and has charge $\pm 1$ under $Q$.
Then the symmetry algebra is instead a $2\times 2$-matrix algebra
\[
\frac{\R[iQ, C]}{((iQ)^2 = -1, C^2 = 1, iQC = -CiQ) } \cong Cl_{1,1} \cong M_2(\R),
\]
so the data of this Majorana is simply a representation of $M_2(\R)$.
In particular, the charge and particle-hole symmetry require the real Hilbert space to be even-dimensional.

The point of this discussion is that the algebras $\R$ and $M_2(\R)$ are not isomorphic.
However, they are Morita equivalent because their categories of representations are equivalent: the correspondence is given by mapping a Hilbert space to a Hilbert space of twice its size.
This means that $K$-theory cannot distinguish these two algebras.
Since SPT-phases of free fermions are described by $K$-theory, the symmetry groups $\Z_2^F$ and $U(1)_Q \rtimes \Z_2^C$ will result in the same class D classification after assuming unit charge.
For concreteness, this classification\footnote{We ignore weak topological invariants for simplicity.} is given by the Bott-periodic expression
\[
KO^{d-3}(\pt),
\]
where $d$ is the spacetime dimension, also see table \ref{tab:ten-fold}.

However, we emphasize that whether this implies that the two classifications are `equal' depends on how you compare them; a natural comparison map is in the direction $SPT^d_{U(1)_Q \rtimes \Z_2^C} \to SPT^d_{\Z_2^F}$ given by breaking the symmetry.
On the algebraic level in the example above, this corresponds to mapping a certain type of representation of $M_2(\R)$ to its underlying representation of $\R$.
Since there exist real vector spaces of odd dimension, this mapping is typically not surjective.
Instead it is a kind of multiplication by two map.
In particular, even though the individual $K$-theory groups are isomorphic, this specific comparison map is not an isomorphism.


\section{Interacting SPT-phases Using Bordism}
\label{sec: interacting}

In contrast to the free case, SPT phases with strong interactions are mathematically on a much less robust footing.
In recent years, much progress has been made from many angles.
Here we review the approach of \cite{freedhopkins} (also see \cite{kapustinfermionicSPT}) to study SPT-phases using invertible field theories.
We provide a new approach to their ansatz for a general fermionic internal symmetry group in section \ref{sec: spacetime structure groups}.
In section \ref{sec: computing the two} we calculate the group of SPT-phases for the two class D internal symmetry groups $\Z_2^F$ and $U(1)_Q \rtimes \Z_2^C$.

\subsection{Low-energy effective theories of invertible lattice models}
\label{sec: low-energy effective}

The core idea is to study condensed matter systems by their low-energy effective continuum quantum field theory, which conjecturally is sufficiently simple to be defined mathematically.
The proposal is that there is a map \cite[section 9.1]{freedlectures}
\[
\{ \text{invertible gapped lattice systems with symmetry }G \} \to ITFT^d_G
\]
from a still-to-be-defined space of condensed matter systems to a space of certain\footnote{These are not topological field theories in the classical sense of Atiyah-Segal in which partition functions take values in $\C^\times$ with the discrete topology, but so-called `topological* field theories' in which we give $\C^\times$ the continuous topology and the theory can be mildly non-topological.} functorially defined quantum field theories that are invertible under stacking. 
One notable case in which this map has been worked out rigorously is for the time-reversal symmetric Kitaev chain \cite{debraygunningham}.
Conjecturally \cite{freedlectures} the map is an isomorphism on path components
\begin{align*}
\{G\text{-SPT}& \text{ phases}\} 
\\
&    \rotatebox[origin=c]{90}{$=$}
\\
\pi_0(\{ \text{invertible gapped lattic}&\text{e systems with symmetry }G \}) 
\\
&    \rotatebox[origin=c]{-90}{$\longrightarrow$}
\\
 \pi_0(IT&FT^d_G) 
 \\
 &\rotatebox[origin=c]{90}{$=$}
 \\
 \{\text{deformation classes of unitary inv}&\text{ertible }G\text{-topological* field theories}\}
\end{align*}
Therefore up to the issue of defining the relevant notions, we can indirectly study symmetry-protected topological phases by studying invertible (roughly) topological field theories.\footnote{The invertibility assumption is essential here: fractons provide an example of lattice systems of which the low energy effective theory is not topological*.}
This turns out to be useful, since topological field theories can be defined rigorously and the homotopy type of invertible field theories is very well-understood \cite{chrisinvertible}.
Three nontrivial problems addressed in \cite{freedhopkins} are
\begin{enumerate}
    \item to define the appropriate `target spectrum' for the field theory;
    \item to define when an invertible topological field theory is unitary;
    \item given an internal fermionic symmetry group $G$, they define Euclidean $d$-dimensional spacetime structure groups $H_d(G)$ which they use to equip their bordisms with the appropriate structure (such as spin).
\end{enumerate}
Since Freed-Hopkins mainly focus on their ten-fold way, we will outline a procedure to arrive at the third point above for a general fermionic internal symmetry group in section \ref{sec: spacetime structure groups}.
We will not go into detail on the other points, but instead be satisfied with the fact that their proposed deformation classes of invertible unitary topological field theories in spacetime dimension $d$ is (noncanonically) isomorphic to 
\[
\operatorname{Torsion}\left(\Omega^{H(G)}_d\right) \oplus \operatorname{Free}\left(\Omega^{H(G)}_{d+1}\right)
\]
for compact symmetry groups $G$.
Here $\Omega^{H(G)}_d$ denotes the bordism group of $d$-dimensional manifolds with $H(G)$-structure and $\operatorname{Torsion}$ and $\operatorname{Free}$ denote the torsion respectively free part of the group.

\subsection{Spacetime structure groups}
\label{sec: spacetime structure groups}

To define the correct structure groups on Euclidean spacetime, we have to face some modifications of our symmetry group by Wick rotation.
For example, in usual Lorentzian signature a time-reversal symmetry will satisfy $T^2 = (-1)^F$, but after Wick rotation this sign will change to $+1$.
Therefore the spacetime structure group on a $d$-dimensional topological field theory of a fermionic system with time-reversal is $\Pin^+_d$, not $\Pin^-_d$ as one might naively expect.
We approach this problem using the following construction, which we think of as a tensor product over $\Z_2^F$ (also see the preprint \cite{stolzpreprint}, and the proof of \cite[theorem 2.2.1]{teichnerthesis}).
Recall that we write the $\Z_2$-grading $\theta(g) \in \{0,1\}$ additively.

\begin{definition}
Let $(G, \theta_G, (-1)^F_G), (H,\theta_H,(-1)^F_H)$ be fermionic groups.
Then the \emph{fermionic tensor product} $G \otimes H $ is equal to $(G \times H)/\langle((-1)^F_G, (-1)^F_H )\rangle$ as a space.
We endow it with a product
\[
(g_1 \otimes h_1) (g_2 \otimes h_2) = ((-1)^F_G)^{\theta_G(g_2)\theta_H(h_1)} g_1 g_2 \otimes h_1 h_2
\]
the central element $1 \otimes (-1)^F_H = (-1)^F_G \otimes 1 \in G \otimes H$ and the grading $\theta(g \otimes h) = \theta_G(g) + \theta_H(h)$.
When it should be clear from the context we omit the subscripts $G$ and $H$ from $\theta$ and $(-1)^F$ to improve readability.
\end{definition}

The proof of the following proposition is a direct computation which is included for completeness.

\begin{proposition}
The fermionic tensor product $G \otimes H$ is a fermionic group.
\end{proposition}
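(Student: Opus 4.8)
The plan is to verify directly that the three pieces of data in the definition --- the multiplication, the central element $1 \otimes (-1)^F_H$, and the grading $\theta$ --- are well-defined on the quotient $(G \times H)/\langle ((-1)^F_G, (-1)^F_H)\rangle$ and satisfy the axioms of Definition \ref{def: fermionic group}. First I would check well-definedness: replacing $(g_1, h_1)$ by $(g_1 (-1)^F_G, h_1 (-1)^F_H)$ must not change the product, using that $(-1)^F_G$ and $(-1)^F_H$ are central and of square one and that $\theta_G((-1)^F_G) = \theta_H((-1)^F_H) = 0$, so the exponent $(-1)^{\theta_G(g_2)\theta_H(h_1)}$ is unaffected and the factor $(-1)^F_G \otimes (-1)^F_H$ that appears equals the identity in the quotient. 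The same $\theta$-triviality of the fermion parities shows $\theta(g \otimes h) = \theta_G(g) + \theta_H(h)$ descends to the quotient, and that $1 \otimes (-1)^F_H = (-1)^F_G \otimes 1$ is a consistent and well-defined element there.

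Next I would check that the multiplication is associative. Expanding $((g_1 \otimes h_1)(g_2 \otimes h_2))(g_3 \otimes h_3)$ and $(g_1 \otimes h_1)((g_2 \otimes h_2)(g_3 \otimes h_3))$, both sides equal $g_1 g_2 g_3 \otimes h_1 h_2 h_3$ up to a power of $(-1)^F_G$; the exponent collected on the left is $\theta_G(g_2)\theta_H(h_1) + \theta_G(g_3)\theta_H(h_1 h_2)$ and on the right $\theta_G(g_3)\theta_H(h_2) + \theta_G(g_2 g_3)\theta_H(h_1)$, and these agree modulo $2$ because $\theta_G$ and $\theta_H$ are homomorphisms (so $\theta_H(h_1 h_2) = \theta_H(h_1) + \theta_H(h_2)$, etc.) --- both reduce to $\theta_G(g_2)\theta_H(h_1) + \theta_G(g_3)\theta_H(h_1) + \theta_G(g_3)\theta_H(h_2)$. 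I would then confirm $1 \otimes 1$ is a two-sided identity and that $(g \otimes h)$ has an inverse (one checks $g^{-1} \otimes h^{-1}$ works up to an explicit sign that can be absorbed, or more cleanly that left and right translation are bijective since the underlying set map is a bijection twisted by a sign cocycle). Continuity of all structure maps is inherited from $G \times H$ and the quotient topology.

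Finally I would verify the fermionic-group axioms for the chosen data: $1 \otimes (-1)^F_H$ squares to $1 \otimes 1$ since $(-1)^F_H$ does and the sign prefactor vanishes as $\theta_H((-1)^F_H) = 0$; it is central because $(-1)^F_H$ is central in $H$, $(-1)^F_G$ is central in $G$, and again the twisting sign trivializes against it; and $\theta(1 \otimes (-1)^F_H) = \theta_G(1) + \theta_H((-1)^F_H) = 0$. I expect the only mildly delicate point --- the "main obstacle," such as it is --- to be bookkeeping the $(-1)^F_G$-valued $2$-cocycle $\beta((g_1, h_1),(g_2,h_2)) = (-1)^F_G{}^{\theta_G(g_2)\theta_H(h_1)}$ carefully enough that associativity and the inverse computation go through without sign errors; everything else is immediate from centrality and $\theta$-triviality of the fermion parities.
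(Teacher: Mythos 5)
Your proposal is correct and follows essentially the same route as the paper's proof: a direct verification of well-definedness on the quotient using $\theta((-1)^F)=0$ and centrality, followed by the identical exponent bookkeeping for associativity (both sides reducing to $\theta_G(g_2)\theta_H(h_1)+\theta_G(g_3)\theta_H(h_1)+\theta_G(g_3)\theta_H(h_2)$), plus the unit, the sign-twisted inverse $((-1)^F)^{\theta(g)\theta(h)}g^{-1}\otimes h^{-1}$, and the fermionic-group axioms for $1\otimes(-1)^F_H$. The only differences are cosmetic: you spell out a few points (centrality of the parity element, continuity) that the paper dismisses as clear.
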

\begin{proof}
To show that the operation is well-defined, consider $g_1' = (-1)^F g_1 \in G$ and $h_1' = (-1)^F h_1$. 
Then because $\theta_H((-1)^F_H) = 0$,
\begin{align*}
((-1)^F)^{\theta(g_2)\theta(h_1')} &g_1' g_2 \otimes h_1' h_2 
\\
&=  ((-1)^F)^{\theta(g_2)(\theta(h_1) + \theta((-1)^F))} (-1)^F g_1 g_2 \otimes (-1)^F h_1 h_2 
\\
&= ((-1)^F)^{\theta(g_2)\theta(h_1)} g_1 g_2 \otimes h_1 h_2.
\end{align*}
There is a similar computation for changing the second argument of the product, which shows it is independent of chosen representatives.

To show that $G \otimes H$ is a fermionic group, note first that $1 \otimes 1$ is a unit and 
\[
(g \otimes h)^{-1} = ((-1)^F)^{\theta(g) \theta(h)} g^{-1} \otimes h^{-1}.
\]
It is also clear that $(-1)^F \otimes 1$ is central of degree zero and the degree $\theta: G \otimes H \to \Z/2$ is a homomorphism.
To show associativity, we pick $g_1,g_2,g_3 \in G$ and $h_1,h_2,h_3 \in H$ and compute
\begin{align*}
    (g_1 \otimes h_1)((g_2 \otimes h_2)&(g_3 \otimes h_3)) 
    \\
    &=(g_1 \otimes h_1)((-1)^F)^{\theta(h_2)\theta(g_3)} g_2 g_3 \otimes h_2 h_3)
    \\
    &= ((-1)^F)^{\theta(h_1)\theta(g_2g_3)} ((-1)^F)^{\theta(h_2)\theta(g_3)} g_1 g_2 g_3 \otimes h_1 h_2 h_3,
\end{align*}
where in the last line the fact that $\theta(((-1)^F)^{\theta(h_2)\theta(g_3)}) = 0$ was used.
A similar computation gives
\begin{align*}
    ((g_1 \otimes h_1)(g_2 \otimes h_2))&(g_3 \otimes h_3) 
    \\
    &= ((-1)^F)^{\theta(h_1)\theta(g_2)} ((-1)^F)^{\theta(g_3)\theta(h_1 h_2)} g_1 g_2 g_3 \otimes h_1 h_2 h_3.
\end{align*}
The proof is finished by comparing the exponents of the central element:
\begin{align*}
    \theta(h_1)\theta(g_2) + \theta(g_3) \theta(h_1h_2) &= \theta(h_1)\theta(g_2) + \theta(g_3) \theta(h_1) + \theta(g_3)\theta(h_2)
    \\
    &= \theta(h_1)(\theta(g_2) + \theta(g_3)) + \theta(g_3)\theta(h_2)
    \\
    &= \theta(h_1)\theta(g_2g_3) + \theta(g_3)\theta(h_2).
\end{align*}
\end{proof}

Note that $G \otimes H$ is actually naturally bigraded; there is a homomorphism $\theta_G \otimes \theta_H: G \otimes H \to \Z_2 \times \Z_2$.
The grading that we used to make $G \otimes H$ into a fermionic group is the bigrading composed with the group operation $\Z_2 \times \Z_2 \to \Z_2$.
We often consider the even part $(G \otimes H)_{ev}$ under this grading.
Note that the other part of the bigrading still gives a grading $(G \otimes H)_{ev}  = G_{ev} \otimes H_{ev} \sqcup G_{odd} \otimes H_{odd}$.
As such $(G \otimes H)_{ev}$ becomes a fermionic group.

\begin{definition}
Let $(G, (-1)^F, \theta)$ be a compact fermionic symmetry group, which we think of as the internal symmetry group.
The associated \emph{$d$-dimensional space-time structure group} is
\[
H_d(G) := (\Pin^+(d) \otimes G)_{ev}.
\]
The orthogonal representation $\rho: H_d(G) \to O_d$ that we will use as a structure map is induced by the projection onto the first factor.
\end{definition}

\begin{example}
Let $G$ be bosonic without time-reversing symmetries, so $(-1)^F = 1$ and $\theta = 0$.
Then the spacetime structure group is
\[
H_d(G) = SO(d) \times G.
\]
\end{example}

\begin{example}
\label{ex:SU(2)}
If $G$ is a fermionic group without time-reversing symmetries, then
\[
H_d(G) = \frac{\Spin(d) \times G}{\Z_2^F},
\]
where we quotiented by the diagonal $\Z_2^F$.
For example, let $G = SU(2)$ with $(-1)^F := -\operatorname{id}$.
The resulting spacetime structure group 
\[
H_d(SU(2)) = \frac{\Spin(d) \times SU(2)}{\Z_2^F}
\]
has received some attention in the recent mathematical physics literature.
It is the main topic of \cite{wangnewsSU(2)} where it is called $\Spin_{SU(2)}(d)$, it is called $G^0_d$ in \cite{freedhopkins} and perhaps the most suitable name is $\Spin^h(d)$ introduced in \cite{chen2017bundles}.

The $SU(2) = \Spin(3)$-group occurring here can be interpreted as internal spin degrees of freedom of the particle.
 This allows us to relate the example to the original applications to topological condensed matter. 
 Namely, it appears as internal symmetry of the class C spin singlet superconductor, which is favoured in materials with negligible spin-orbit interaction.
Note that this physical interpretation is $3+1$d-centric from a particle physics perspective: we really need our electrons to be massive $3+1$d spinors to get internal $\Spin(3)$-degrees of freedom.
If we want to describe SPT-phases in class C in say dimension $2+1$, we imagine our electrons to be confined in a spatial plane, but still have the internal $\Spin(3)$-symmetry.
\end{example}

\subsection{The group of SPT-phases}

Recall that a superalgebra is a $\Z_2$-graded algebra in which the multiplication respects the $\Z_2$-grading.
For example, the Clifford algebra with (say) positive squares
\[
Cl_{+p} = \frac{\R[\gamma_1, \dots, \gamma_p]}{\gamma_i \gamma_j + \gamma_j \gamma_i = 2 \delta_{ij} }
\]
is a real superalgebra in which all $\gamma_i$ are odd.
The simplest superalgebras are the superdivision algebras: superalgebras in which every homogeneous element is invertible.
Algebraically, the origin of the tenfold way arises from the following super-version of the Frobenius theorem on real division algebras \cite{geikomoore}.

\begin{theorem}[\cite{wall1964graded}]
There are ten superdivision algebras over $\R$:
\begin{align*}
&\C, \quad \C l_1, 
\\
&\R, \quad Cl_{-1}, \quad Cl_{-2}, \quad Cl_{-3}, \quad \mathbb{H}, \quad Cl_{3}, \quad Cl_{2}, \quad Cl_{1}
\end{align*}
\end{theorem}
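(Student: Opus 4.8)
The plan is to prove the classification of real superdivision algebras by combining the classification of ordinary real division algebras (the Frobenius theorem) with a structural analysis of the $\Z_2$-grading. First I would recall that a superdivision algebra $D = D_0 \oplus D_1$ has $D_0$ an ordinary (ungraded) division algebra over $\R$, since every nonzero homogeneous element of degree zero is invertible and $D_0$ is closed under multiplication. By Frobenius, $D_0$ is one of $\R$, $\C$, or $\mathbb{H}$. The next step is to analyze the odd part $D_1$: either $D_1 = 0$, in which case $D = D_0$ gives the three purely even cases $\R$, $\C$, $\mathbb{H}$; or $D_1 \neq 0$, in which case pick any nonzero $\omega \in D_1$. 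Since $\omega$ is homogeneous it is invertible, so left multiplication by $\omega$ is a bijection $D_0 \to D_1$ and a bijection $D_1 \to D_0$; in particular $\dim_\R D_1 = \dim_\R D_0$, and $\omega^2 \in D_0$ is a nonzero central-up-to-sign element.

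The heart of the argument is then the case analysis on $D_0 \in \{\R, \C, \mathbb{H}\}$ together with the data of how $\omega$ interacts with $D_0$ by conjugation and the value of $\omega^2$. When $D_0 = \R$: conjugation by $\omega$ is trivial, $\omega$ is central, and rescaling lets us assume $\omega^2 = \pm 1$, giving exactly $Cl_1$ and $Cl_{-1}$. When $D_0 = \C$: conjugation by $\omega$ is an $\R$-algebra automorphism of $\C$, hence either the identity or complex conjugation. If it is the identity, $D$ is commutative, $D = \C[\omega]$, and one checks this forces $\omega^2$ to be a square in $\C$ (else $D$ contains zero divisors), so after rescaling $\omega \in D_0$, contradicting $D_1 \cap D_0 = 0$ unless we land in an already-counted case — so this subcase does not yield a new algebra. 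If conjugation by $\omega$ is complex conjugation, then $D \cong Cl_1 \otimes_\R \C$ type behavior; rescaling gives $\omega^2 = \pm 1$, but one of the two sign choices is isomorphic to the other after replacing $\omega$ by $i\omega$ (since $(i\omega)^2 = -i^2\omega^2 \cdot(\text{sign from }\omega i = -i\omega)$), so exactly one new algebra arises, namely $\C l_1$ (the complex Clifford algebra on one generator). When $D_0 = \mathbb{H}$: conjugation by $\omega$ is an $\R$-algebra automorphism of $\mathbb{H}$, hence by Skolem–Noether inner, so after multiplying $\omega$ by a suitable unit quaternion we may assume $\omega$ is central; then $\omega^2 = \pm 1$ after rescaling, yielding $Cl_2$ (for $+1$) and $Cl_{-2}$ (for $-1$), which are $M_2(\R)$-graded and $\mathbb{H} \oplus \mathbb{H}$-graded superalgebras respectively.

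So far this produces $\R, \C, \mathbb{H}, Cl_1, Cl_{-1}, \C l_1, Cl_2, Cl_{-2}$ — eight algebras — and the remaining two, $Cl_{-3}$ and $Cl_3$, must be recovered by noticing that in the $D_0 = \mathbb{H}$ case the automorphism-is-inner step has a subtlety: after making $\omega$ central we should double-check whether $\omega$ together with $\mathbb{H}$ generates a superalgebra in which $\omega$ is genuinely odd and central, versus cases where the grading is "twisted" so that $D_0 = \mathbb{H}$ but $D$ is really $Cl_{\pm 3}$ with its standard grading (where the even part is $Cl_{\pm 3,0} \cong \mathbb{H}$). I would handle this by being careful that $\omega$ central of square $\pm 1$ with $D_0 = \mathbb{H}$ gives $D \cong \mathbb{H}[\omega]/(\omega^2 \mp 1)$, and that $\mathbb{H}[\omega]/(\omega^2 - 1) \cong \mathbb{H} \oplus \mathbb{H}$ as ungraded algebras but carries a superdivision structure (this is $Cl_2$ up to relabeling, or $Cl_{-3}$ depending on conventions), while $\mathbb{H}[\omega]/(\omega^2+1) \cong M_2(\C)$ — and then cross-check against Wall's table and the Morita/Brauer-class bookkeeping to match names. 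The main obstacle I anticipate is precisely this last bit: keeping the Clifford-algebra naming conventions consistent (positive versus negative generators, the shift between $Cl_n$ as ungraded versus super) so that the final list reads exactly $\C, \C l_1, \R, Cl_{-1}, Cl_{-2}, Cl_{-3}, \mathbb{H}, Cl_3, Cl_2, Cl_1$ and matches the tenfold-way layout; the existence and exhaustiveness parts are routine once Frobenius and Skolem–Noether are in hand, but the combinatorics of identifying which abstract $(D_0, \omega^2, \text{conjugation action})$ datum corresponds to which Morita class of Clifford algebra is where care is needed. Since this is Wall's classical theorem \cite{wall1964graded}, in the paper I would likely just cite it rather than reproduce the full case analysis.
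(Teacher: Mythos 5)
The paper gives no proof of this statement --- it is quoted from Wall and used as a black box --- so there is nothing to compare against; the only question is whether your argument stands on its own. Its skeleton (Frobenius applied to $D_0$, then analysis of an odd unit $\omega$ via its conjugation action on $D_0$ and the value of $\omega^2$) is the standard and correct route, and with correct bookkeeping it does yield exactly ten algebras: $3$ purely even, $2$ with $D_0=\R$, $1+2$ with $D_0=\C$, and $2$ with $D_0=\mathbb{H}$. But the execution contains concrete errors in the middle cases. In the $D_0=\C$, $\omega$-central subcase, writing $\omega^2=\nu^2$ and replacing $\omega$ by $\nu^{-1}\omega$ keeps $\omega$ \emph{odd} (you are rescaling by an even unit), so it never ``lands in $D_0$''; this subcase is not vacuous but produces $\C l_1$, whose underlying ungraded algebra $\C\oplus\C$ has only inhomogeneous zero divisors and is therefore a perfectly good superdivision algebra (just as $Cl_{+1}\cong\R\oplus\R$ is). Discarding it is what forces the later miscounting.

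In the $D_0=\C$, conjugation-equals-complex-conjugation subcase, your sign manipulation fails: from $\omega i=-i\omega$ one gets $(i\omega)^2=i\omega i\omega=-i^2\omega^2=+\omega^2$, so replacing $\omega$ by $i\omega$ does \emph{not} flip the sign of $\omega^2$. The two signs give genuinely non-isomorphic superalgebras, namely $Cl_{+2}\cong M_2(\R)$ and $Cl_{-2}\cong\mathbb{H}$; neither is $\C l_1$, which is commutative. Likewise the $D_0=\mathbb{H}$ case produces the eight-dimensional algebras $Cl_{\pm 3}$ (a central odd $\omega$ with $\omega^2=-1$ gives $\mathbb{H}\otimes_\R\C\cong M_2(\C)\cong Cl_{+3}$, and $\omega^2=+1$ gives $\mathbb{H}\oplus\mathbb{H}\cong Cl_{-3}$), not the four-dimensional $Cl_{\pm 2}$ --- a dimension count over $\R$ already rules out your labels, since $\dim D=2\dim D_0=8$ there. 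Your closing paragraph senses that something is off and tries to recover $Cl_{\pm 3}$ by revisiting the $\mathbb{H}$ case, but the fix is not a subtlety of Skolem--Noether; it is the three misassignments above. Since the paper (reasonably) just cites Wall, the practical recommendation is the same as your own: cite the result, or if you include the case analysis, redo the $D_0=\C$ and $D_0=\mathbb{H}$ branches as indicated.
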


There is a canonical way to pass from superdivision algebras to fermionic groups, giving us a special family of ten fermionic internal symmetry groups as follows.
If $D$ is a super division algebra, define the sphere of $D$ to be the quotient group
\[
S(D) := \frac{D^\times}{\R_{>0}},
\]
where $D^\times \subseteq D$ is the topological group of homogeneous (hence invertible) elements and $\R_{>0} \subseteq D^\times$ the subgroup of positive scalars.
We make $S(D)$ into a fermionic group by using $(-1)^F = [-1] \in S(D)$ and $\theta[d] = |d|$ is the supergrading of $D$.
The list of ten fermionic groups we obtain is the list of non-relativistic internal symmetry groups of the ten-fold way in the formulation of \cite[(9.34),(9.35)]{freedhopkins}
\begin{align*}
&\Spin^c(1), \Pin^c(1), 
\\
&\Spin(1), \Pin^-(1), \Pin^-(2), \Pin^-(3), 
\Spin(3), \Pin^+(3), \Pin^+(2), \Pin^+(1).
\end{align*}
In particular, the case $\Pin^-(1)$ recovers example \ref{example: class DIII} and the case $\Pin^-(2)$ recovers example \ref{example: class AII}.
The fermionic groups 
\[
H_d(S(D)) = (\operatorname{Pin}^+(d) \otimes S(D))_{ev}
\]
will give the ten spacetime structure groups of \cite{freedhopkins}\footnote{There is a sign difference between \cite{freedhopkins} and this formulation. For example, in \cite{freedhopkins}, the non-relativistic internal symmetry group $I = \Pin^+(1)$ corresponds to the structure group $H_d = \Pin^+(d)$, while in the formulation of this paper it corresponds to $\Pin^-(d)$. The convention chosen here seems to be in agreement with the physics literature.}, see table \ref{tab:ten-fold}.
For example, $D = \R$ gives $S(D) = \Spin_1 = \Z_2^F$ and $H_d(S(D)) = \Spin(d)$.
Example \ref{ex:SU(2)} corresponds to the case $D = \mathbb{H}$.

Next the idea is to endow spacetime with a $H_d(G)$-structure, see \cite[section 2.2]{freedhopkins} for the definition of an $H$-structure on a manifold for a group $H$ together with a map to $O_d$.
We then consider unitary invertible field theories with this structure group, which can be described using bordism groups.
Namely, the proposal of \cite[remark 8.39]{freedhopkins} extended to the current setting is

\begin{ansatz}
Given a fermionic internal symmetry group $G$, the abelian group of deformation classes of $d$-dimensional $G$-protected SPT-phases is
\[
SPT^d_G := [MTH(G), \Sigma^{d+1} I \Z].
\]
\end{ansatz}

Here $MTH(G)$ is the Madsen-Tillmann spectrum \cite{gmtw}, $I \Z$ the Anderson-dual of the sphere spectrum, and the square brackets denote homotopy classes of maps of spectra.
For more details, we refer the reader to \cite[section 5.3 and 7.1]{freedhopkins}. 
However, in this paper we will only consider the right hand side of the noncanonical isomorphism
\[
SPT^d_G \cong \operatorname{Torsion}\left(\Omega^{H(G)}_d\right) \oplus \operatorname{Free}\left(\Omega^{H(G)}_{d+1}\right)
\]
which holds for every compact $G$.
Here $\Omega^{H(G)}_d$ denotes the bordism group consisting of $d$-dimensional $H_d(G)$-manifolds modulo the ones that bound a $d+1$-dimensional $H_{d+1}(G)$-manifold.

\begin{example}
Consider the class D internal symmetry group $\Z_2^F$.
The bordism groups corresponding to the spacetime structure group $H_d(\Z_2^F) = \Spin(d)$ are well-known \cite{anderson1967structure}.
Using this we can find the group of SPT-phases. 
For example, in dimension $2+1$ 
\[
SPT^3_{\Z_2^F} \cong \operatorname{Torsion}(\Omega^{\Spin}_3) \oplus \operatorname{Free}(\Omega^{\Spin}_4) = \Z
\]
is the group of SPT-phases in dimension $2+1$ protected by fermion parity only.
The partition function of the generator is given by the Atiyah-Patodi-Singer eta invariant of the Dirac operator.
\end{example}

In the next section we will show that $SPT_{U(1)_Q \rtimes \Z_2^C}^3$ is not isomorphic to $SPT_{\Z_2^F}^3$ by computing the relevant bordism groups.
This will show that in the approach to SPT-phases using bordism, these two symmetry groups describing class D give a different classification of topological phases.

\begin{table}[h!] 
\centering
{
\renewcommand{\arraystretch}{1.5}
\resizebox{\columnwidth}{!}{
\begin{tabular}{l|l|l||l|l|l|l|l|l|l|l}
Symmetry class & A & AIII & D & BDI & AI & CI &  C & CII & AII & DIII 
\\
\hline
Superdivision algebra $D$ & $\C$ & $\C l_1$ & $\R$ & $Cl_{+1}$ & $Cl_{+2}$& $Cl_{+3}$ & $\mathbb{H}$ & $Cl_{-3}$& $Cl_{-2}$& $Cl_{-1}$
\\
\hline 
Internal symmetry $S(D)$ & $Spin_1^c$ & $Pin_1^c$ & $Spin_1$ & $Pin_1^+$  & $Pin_2^+$ & $Pin_3^+$ & $Spin_3$ & $Pin_3^-$ & $Pin_2^-$ & $Pin_1^-$ 
\\
\hline
$K$-theory group & $K^{d+1}$ & $K^d$ & $KO^{d-3}$ & $KO^{d-2}$ & $KO^{d-1}$ & $KO^{d}$ & $KO^{d+1}$ & $KO^{d+2}$ & $KO^{d+3}$ & $KO^{d+4}$ 
\\
\hline
Spacetime structure group $H$ & $Spin^c_d$ & $Pin^c_d$ & $Spin_d$ & $Pin^-_d$ & $Pin^{\tilde{c}-}_d$ & $G^+_d$ & $G^0_d$ & $G^-_d$ & $Pin^{\tilde{c}+}_d$ & $Pin^+_d$ 
\end{tabular}
}
}
\caption{The tenfold way table in the formulation of this paper.
The internal symmetry group $S(D)$ is chosen to correspond to one of the ten real superdivision algebras as defined above.
The $K$-theory group classifies the free fermion SPT-phases of spacetime dimension $d$ in the corresponding symmetry class (ignoring weak invariants) and we implicitly evaluated the $KO$-theory on a point.
The spacetime structure group is the combination of the internal symmetry and the (Euclidean) Lorentz symmetry endowing spacetime with an $H$-structure discussed in section \ref{sec: spacetime structure groups}.
See \cite{freedhopkins} for the notation used for $H$.
}\label{tab:ten-fold}
\end{table}

\section{Class D Classification for Interacting Theories}

\subsection{Computing the other class D bordism group}
\label{sec: computing the two}

In this section we will compute the group of $H$-manifolds up to bordism in low dimensions, where (see example \ref{ex:SU(2)})
\[
H_d := H_d(G) = \frac{\Spin(d) \times U(1)_Q \rtimes \Z_2^C}{\Z_2^F}
\]
is the spacetime structure group associated to the fermionic internal symmetry group $G = U(1)_Q \rtimes \Z_2^C$ consisting of charge $Q$ and charge-conjugation $C$.
To this extent we will employ an Atiyah-Hirzebruch type spectral sequence originally introduced in \cite[theorem 3.1.1]{teichnerthesis} under the name `James spectral sequence' as follows.
Consider the short exact sequence of topological groups (also see proposition \ref{prop: exact} below)
\begin{equation}
\label{eq: ses}
1 \to \Spin(d) \to H_d \to U(1) \rtimes \Z_2 \to 1,    
\end{equation}
where the map $H_d \to U(1) \rtimes \Z_2$ is given by $[r, z, y] \mapsto (z^2,y)$, where $r \in \Spin_d, z \in U(1)$ and $y \in \Z_2$.
Note that $U(1) \rtimes \Z_2 \cong O(2)$ by mapping the $\Z_2$ to any reflection in $\R^2$.
This short exact sequence is compatible with the inclusion maps $H_{d} \to H_{d+1}$ and $\Spin(d) \to \Spin(d+1)$.
So taking the colimit as $d \to \infty$, there is an associated fibration of topological spaces
\[
B \Spin \to B H \to BO(2)
\]
in which the first two spaces have compatible structure maps to $BO$.
We will use a generalized version of the Serre spectral sequence associated to this fibration which looks like
\[
H_p(BO(2), \Omega^{\Spin}_q) \implies \Omega^H_{p+q}.
\]
Just as in the ordinary Serre spectral sequence, we might have to deal with local coefficients.
In our case the base is not simply connected: $\pi_1(BO(2)) = \pi_0(O(2)) = \Z_2$.
However, the associated homotopy action of $\Z_2$ on the fiber of the fibration is trivial.
This can be seen from the short exact sequence (\ref{eq: ses}): the element $1 \otimes C \in H_d$ maps to the nontrivial connected component of $O(2)$ but commutes with the subgroup $\Spin_d$.
Therefore the coefficients are not twisted.
We can now look up the first few spin bordism groups \cite{kirbytaylor}
\begin{equation}
\label{eq:spin bordism}
\Omega^{\Spin}_0 = \Z, \quad \Omega^{\Spin}_1 = \Omega^{\Spin}_2= \Z_2, \quad \Omega^{\Spin}_3 = 0, \quad \Omega^{\Spin}_4 =\Z.
\end{equation}
Writing the base space as $B = BO(2)$ to save space, the relevant part of the second page of the spectral sequence looks like

\begin{center}
\begin{tikzcd}[row sep={2.5em,between origins}, column sep={6em,between origins}]
H_0(B,\Z) & \dots &&&&\\
0               &  0                &   0               & \dots  && \\
H_0(B,\Z_2)    & H_1(B,\Z_2)      & H_2(B,\Z_2)      & H_3(B,\Z_2)      & \dots &   \\
H_0(B,\Z_2)    & H_1(B,\Z_2)      & H_2(B,\Z_2)      & H_3(B,\Z_2)      & H_4(B,\Z_2)    & \dots      \\
H_0(B,\Z) & H_1(B,\Z)   & H_2(B,\Z)   & H_3(B,\Z)   & H_4(B,\Z)   & H_5(B,\Z)   
\end{tikzcd}
\end{center}

The cohomology of $BO(2)$ with $\Z_2$-coefficients is well-known: \cite{milnorstasheff}
\[
H^*(BO(2),\Z_2) = \Z_2[w_1,w_2],
\]
where $w_1$ is the first Stiefel-Whitney class in degree one and $w_2$ the second Stiefel-Whitney class in degree two.
Since $\Z_2$ is a field, the homology is in duality with the cohomology:
\[
H_*(BO(2),\Z_2) = \Hom (H^*(BO(2),\Z_2),\Z_2)= \Z_2[\nw_1, \nw_2].
\]
so the homology is spanned by the dual basis of the basis consisting of $w_1^k w_2^l$, which we write $\nw_1^k \nw_2^l$.
We emphasize that $\nw_1 \nw_2$ is by no means the product of $\nw_1$ and $\nw_2$.
The integral homology of $BO(n)$ for general $n$ was computed in \cite{feshbach1983integral}.
Alternatively one can compute it using Bockstein homology in a similar way to \cite[section 3.E]{hatcher} using the fact that 
\begin{align*}
    H^*(BO(2),\Z[1/2]) &= \Z[1/2,p_1],
\end{align*}
see \cite[theorem 15.9, problem 15B]{milnorstasheff}.
The result in the dimensions of interest is given in table \ref{tab: cohomology of BO}.
Plugging in these results yields the second page in figure \ref{second page}.

\begin{sidewaysfigure}
    \centering
    \label{second page}
\begin{tikzcd}[row sep={6em,between origins},column sep={8em,between origins},ampersand replacement=\&]
\Z \& \dots \&\&\&\&\\
0       \&  0                            \&   0               \& \dots  \&\& \& \\
\Z_2    \& \Z_2\langle\nw_1 \rangle      \& \Z_2\langle\nw_1^2 , \nw_2 \rangle   \& \Z_2\langle\nw_1 \nw_2 , \nw_1^3 \rangle      \& \dots \&\&   \\
\Z_2    \& \Z_2\langle\nw_1 \rangle      \& \Z_2\langle\nw_1^2, \nw_2 \rangle \arrow[llu,"\sim",swap]  \& \Z_2\langle\nw_1 \nw_2 , \nw_1^3 \rangle \arrow[llu,two heads]\& \Z_2\langle\nw_2^2,\nw_2\nw_1^2, \nw_1^4\rangle \arrow[llu]    \& \dots \&     \\
\Z      \& \Z_2\langle\nw_1 \rangle      \& \Z_2\langle \nw_2 \rangle       \arrow[llu,"\sim", swap]    \&  \Z_2\langle\nw_1^3 \rangle     \arrow[llu, "0",swap] \& \Z \langle \np_1 \rangle \oplus \Z_2 \langle \nw_1^2 \nw_2 \rangle \arrow[llu]  
\& \Z_2\langle \nw_1 \nw_2^2, \nw_1^5 \rangle \arrow[llu]\& \dots \\
\&\&\&\&\&
\end{tikzcd}
\caption{The second page of the spectral sequence to compute $\Omega^H_d$. 
Several second differentials are displayed.}
\end{sidewaysfigure}
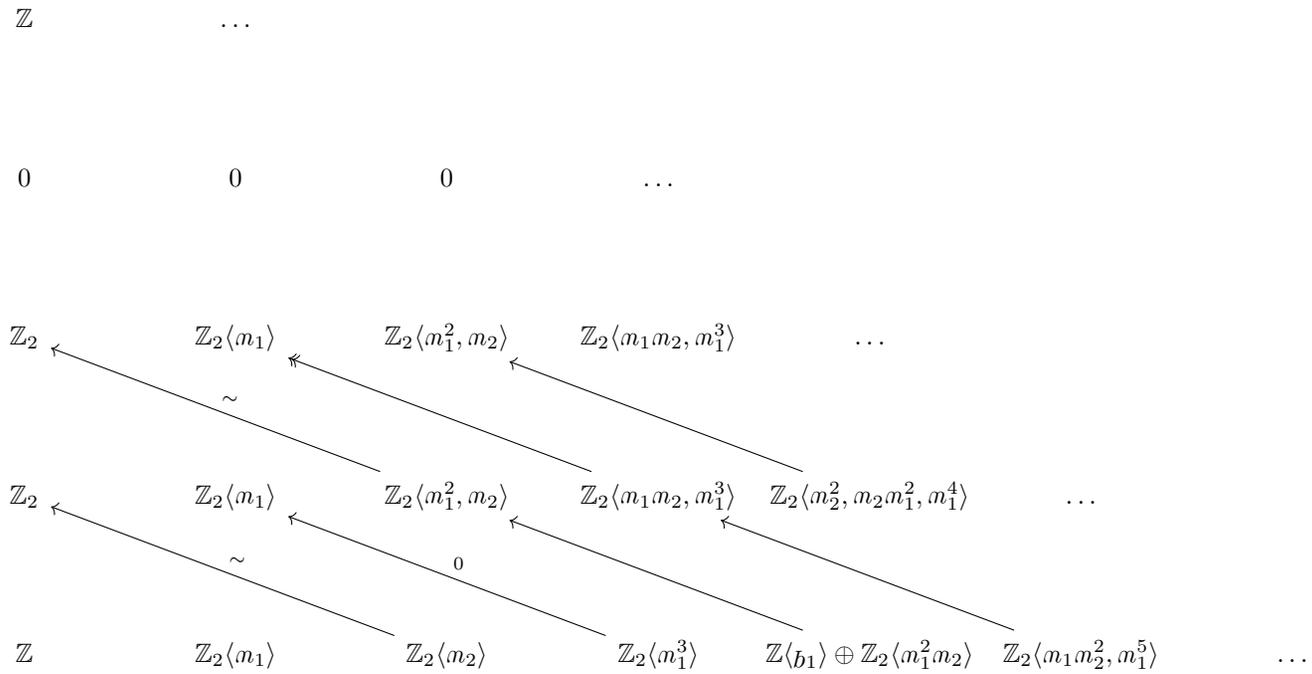

\if
But since this integral homology is less well-known we will compute $H_*(BO(2),\Z)$ in low dimensions using elementary considerations.
From \cite[theorem 15.9, problem 15B]{milnorstasheff} we retrieve:
\begin{align*}
    H^*(BO(2),\Z[1/2]) &= Z[1/2,p_1],
\end{align*}
where $p_1$ is the first Pontryagin class in degree $4$.
To compute the integral cohomology of this, we make use of a form of Bockstein cohomology (see section 3.E of \cite{hatcher}).
For this we have to compute how the first Steenrod square $Sq^1$ acts on the cohomology.
Because $w_1$ is of degree one, $Sq^1(w_1) = w_1^2$.
It follows from problem 8A in \cite{milnorstasheff} that 
\[
Sq^1 w_2 = w_1 w_2,
\]
since $w_3 = 0$ in $BO(2)$.
Using that $Sq^1$ is a derivation, we can now compute it on the full cohomology ring. 
Now consider the commutative diagram

\[
\begin{tikzcd}
\dots \arrow[r]& H^n(BO(2),\Z) \arrow[r,"\pmod 2"] \arrow[d, "\pmod 4"] & H^n(BO(2),\Z_2) \arrow[r] \arrow[d,equal] & H^{n+1}(BO(2),\Z) \arrow[d, "\pmod 2"] \arrow[r] & \dots\\
\dots \arrow[r] & H^n(BO(2),\Z_4) \arrow[r,"\pmod 2"] & H^n(BO(2),\Z_2) \arrow[r,"Sq^1"] & H^{n+1}(BO(2),\Z_2) \arrow[r]& \dots
\end{tikzcd}
\]

Note that if $x \in H^n(BO(2),\Z_2)$ lifts to integral cohomology, then $Sq^1 x = 0$.
Also, if $Sq^1 y \in H^n(BO(2),\Z_2)$ is in the image of the Steenrod square, then $Sq^1 y$ lifts to an integral cohomology class that is $2$-torsion.
So for example, we know that $Sq^1 w_1 = w_1^2$ and so $w_1^2$ lifts to a $2$-torsion class in $H^2(BO(2),\Z)$ and $w_1$ has no integral lift.
Unfortunately, if we only know that $x \in H^n(BO(2),\Z_2)$ satisfies $Sq^1 x = 0$, but not whether it is in the image, then we can only conclude that $x$ lifts to a $\Z_4$-class.
However, this is where our knowledge of the cohomology with $\Z[1/2]$ might help us out to see whether it lifts further to an integral class (alternatively, we could use results in \cite[section 30.5]{borel1959characteristic} that the torsion in the cohomology $BO(n)$ is all $2$-torsion at this point).
As an example, note that $Sq^1 (w_2^2) = 0$, but $w_2^2$ is not in the image.
Therefore it lifts to an order $4$ class in $H^4(BO(2),\Z_4)$.
We see that $w_2^2$ has an integral lift $p_1$, which maps to the multiplicative generator of $H^*(BO(2),\Z[1/2])$.

\textcolor{red}{Explanation to get to homology}
\fi

\begin{table}
    \centering
    \begin{tabular}{c|c|c|c|c|c|c}
        $i$             & $0$ & $1$ & $2$           & $3$           & $4$                           & $5$                  \\ \hline
        $H_i(BO(2),\Z_2)$ & $\Z_2$& $\Z_2$ & $\Z_2^2$        & $\Z_2^2$        & $\Z^2_2$              & $\Z_2^3$             \\ \hline
        generators      & $1$ & $\nw_1$ & $ \nw_1^2, \nw_2$   & $\nw_1 \nw_2, \nw_1^3$   & $\nw_2^2,\nw_2\nw_1^2, \nw_1^4$           & $\nw_1 \nw_2^2, \nw_1^3 \nw_2^2, \nw_1^5$  \\ \hline
        $H_i(BO(2)   ,\Z)$ &$\Z$& $\Z_2$ & $\Z_2$         & $\Z_2$        & $\Z \oplus \Z_2$            & $\Z_2^2$                  \\ \hline 
        generators      & $1$ & $\nw_1$ & $\nw_2$   & $ \nw_1^3$   & $\np_1,   \nw_1^2 \nw_2   $ & $\nw_1 \nw_2^2$, $\nw_1^5$
    \end{tabular}
    \caption{Low-dimensional homology groups of $BO(2)$. Here $\protect \nw_1^k \protect \nw_2^l$ is the dual basis of the basis $w_1^k w_2^l$ of $H^*(BO(2),\Z_2)$. If they exist, we denote their integral lifts to two-torsion classes by the same symbol. Furthermore $\protect\np_1$ is an integral lift of $\protect\nw_2^2$ generating the $\Z$-part.}
    \label{tab: cohomology of BO}
\end{table}

To get to the third page, there are general expressions for the second differentials (theorem 3.1.3 in \cite{teichnerthesis}).
In the case at hand, the differentials of the form $d_2: H_{i+2}(BO(2),\Z_2) \to H_{i}(BO(2),\Z_2)$ are dual to 
\[
Sq^2 + w_2\cdot: H^{i}(BO(2),\Z_2) \to H^{i+2}(BO(2),\Z_2).
\]
The differentials of the form $d_2: H_{i+2}(BO(2),\Z) \to H_{i}(BO(2),\Z_2)$ are the same, but only after reduction modulo two $H_{i+2}(BO(2),\Z) \to H_{i+2}(BO(2),\Z_2)$.

The map $\delta := Sq^2 + w_2$ is given as follows in degrees at most five:
\begin{align*}
    &\delta(1) = w_2, \quad \delta w_1 = w_1 w_2, \quad \delta w_2 = 0, \quad \delta(w_1^2) = w_1^4 + w_1^2 w_2, 
    \\
    &\delta(w_1^3) = w_1^5 + w_1^3 w_2, \quad \delta(w_1 w_2) = w_1^3 w_2.
\end{align*}
To compute these, one can use the Cartan formula, the fact that $Sq^i x = x^2$ if $|x| = i$ and $Sq^i x = 0$ if $|x| < i$.
One other useful equation is
\[
Sq^1 w_2 = w_1 w_2,
\]
which follows for example from problem 8A in \cite{milnorstasheff} since $w_3 = 0$ in $BO(2)$.
The dual $\delta^*: H_*(BO(2),\Z_2) \to H_{*-2}(BO(2),\Z_2)$ of $\delta$ in these degrees is
\begin{align*}
    \delta^* \nw_2 &= 1 \quad \delta^* \nw_1^2 = 0 \quad \delta^* \nw_1^3 = 0 \quad \delta^* \nw_1 \nw_2 = \nw_1
    \\
    \delta^* \nw_1^4 &= \nw_1^2 \quad \delta^* \nw_1^2 \nw_2 = \nw_1^2 \quad \delta^* \nw_2^2 = 0
    \\
    \delta^* \nw_1^5 &= \nw_1^3 \quad \delta^* \nw_1^3 \nw_2 = \nw_1^3 + \nw_1 \nw_2 \quad \delta^* \nw_1 \nw_2^2 = 0.
\end{align*}
Using these formulas and reading from table \ref{tab: cohomology of BO} which classes lift to integral homology, we can compute the third page of the spectral sequence.

\begin{center}
\begin{tikzcd}[row sep={2.5em,between origins}]
\Z & \dots &&&&\\
0               &  0                &   0               & \dots  && \\
0    &0      & \Z_2      & \dots      &  &   \\
0    & \Z_2      & 0      & 0      & \dots    &       \\
\Z      & \Z_2      & 0          & \Z_2      & \Z   & \Z_2 \arrow[llluu,"d_3", bend right, swap]
\end{tikzcd}
\end{center}

In this part of the third page there are still two possible third differentials. 
First of all, the $\Z$ in the upper left corner $(0,4)$ could still be hit by the group at $(3,2)$, which is not displayed here.
However, looking back to the second page, we see that the group at this spot is torsion.
Therefore on the third page it is still torsion and since there is no nonzero map from a torsion group to $\Z$ we see that this third differential vanishes.

However, there is still a possible differential from the $(5,0)$-spot to the $(2,2)$-spot, which could be either an isomorphism or zero.
We can conclude that 
\begin{align*}
    \Omega^H_0 = \Z, \quad \Omega^H_1 = \Z_2, \quad \Omega^H_2 = \Z_2, \quad \Omega^H_3 = \Z_2, \quad \Omega_4^H = \Z^2 \oplus \Z_2 \text{ or } \Z^2
\end{align*}
where the last group depends on this third differential and the relevant extension problem.
This indeterminacy will be relevant when computing $SPT_{U(1)_Q \rtimes \Z_2^C}^4$, but in this paper we will only concern ourselves with dimension $3$ and lower.
One can now compare $SPT^{\Z_2^F}_d$ to $SPT^{U(1)_Q \rtimes \Z_2^C}_d$ for $d\leq 3$ looking at the equations \ref{eq:spin bordism}.
For $d < 3$ they are isomorphic, but for $d = 3$ we deduce $SPT_{U(1)_Q \rtimes \Z_2^C}^3 = \Z^2 \oplus \Z_2$, which is not equal to 
\[
SPT_{\Z_2^F}^3 \cong  \Z.
\]

Since the new $\Z$ and $\Z_2$ are on the bottom row of the spectral sequence, it is not hard to construct invariants realizing them.
For example, comparing with the spectral sequence we see that the three-dimensional $\Z_2$-phase can be realized as the invertible topological field theory with partition function
\begin{equation}
\label{eq:partition function}
Z(M,E) = (-1)^{\int_M w_1(E)^3}, 
\end{equation}
where $E \to M$ is the two-dimensional real vector bundle we get by coupling our bosonic $G_b \cong O(2)$-symmetry to a background gauge field.
Mathematically, it is the vector bundle induced by the $H_d$-structure under the natural map $H_d \to O(2)$.
In the next section we will elaborate on these statements and give an explicit representative $(M,E)$ for which this partition function is nontrivial.

The integer invariant is harder to describe, because integral invariants cannot be represented by classical Atiyah-Segal type field theories.
The physically correct way to describe the partition function is by the geometric secondary invariant associated to the primary invariant one dimension up, see \cite[section 21.2]{differentialcohomology} for a pedagogical introduction.
For us, the primary invariant in dimension four is the first Pontryagin number of the bundle $E \to M$.
The associated secondary invariant is the corresponding classical Chern-Simons theory with gauge group $O(2)$ of level $1$.
\footnote{There are two integral refinements of $p_1 \in H^4(BO(2),\R)$ because the torsion subgroup of $H^4(BO(2),\Z)$ is $\Z_2$ generated by $w_1^4$. The Chern-Simons theories corresponding to $w_1^4 + p_1 \in H^4(BO(2),\Z)$ and $p_1 \in H^4(BO(2),\Z)$ might be different. But after the choice of integral $p_1$ there is a unique differential refinement \cite[Theorem 13.1.1]{differentialcohomology} defining the Chern-Simons action.}

Note additionally that in lower dimensions we have a similar phenomena as was described for the free case at the end of section \ref{subsec:Morita equiv}.
Namely, even though the classifications are equal, the natural comparison maps are not isomorphisms.
For example, for dimension $0+1$ the nontrivial invertible field theory realizing the $\Z_2$ is 
\[
Z(M,E) = (-1)^{\int_M w_1(E)}. 
\]
Restricting this theory to spin manifolds with trivial bundle $E$ corresponds to breaking the particle-hole and $U(1)$-symmetry.
Therefore the comparison map $\Z_2 = SPT_{U(1)_Q \rtimes \Z_2^C}^1 \to SPT_{\Z_2^F}^1 = \Z_2$ is trivial.

We finish this section by remarking that the above computations can also be done without assuming the spin-charge relation.
A reasonable choice of internal symmetry group is then $\Z_2^F \times (U(1)_Q \rtimes \Z_2^C)$ resulting in the spacetime structure group $H_d = \Spin(d) \times (U(1)_Q \rtimes \Z_2^C)$.
The second page of the new spectral sequence is identical and the only thing that changes in the above computations is the expression for $d_2$ which now has $\delta = Sq^2$.
The exact result in this case unfortunately strongly depends on whether third differentials are present, but at least one of the two $\Z_2$'s on place $(2,1)$ can be shown to survive to $E_\infty$.
Therefore also in this case $\Omega^{\Spin \times (U(1)_Q \rtimes \Z_2^C)}_3$ contains a $2$-torsion element which is not present in $\Omega^{\Spin}_3 = 0$ and so $SPT^3_{\Z_2^F \times (U(1)_Q \rtimes \Z_2^C)} \ncong SPT^3_{\Z_2^F}$.

\subsection{Explicit generators}

\label{sec:explicit}

To compute explicit invariants, we first provide a few results that are useful tools to study $H(G)$-manifolds for general symmetry group $G$.

\begin{proposition}
\label{prop: exact}
Let $G,H$ be fermionic groups.
If $(-1)^F \neq 1$ in $H$, then there is an exact sequence of topological groups
\[
1 \to G \overset{i}{\to} G \otimes H \overset{\pi}{\to} H_b \to 1,
\]
where we recall that $H_b := H/\langle (-1)^F \rangle$.
If additionally $G$ has a nontrivial grading homomorphism, then this restricts to an exact sequence
\[
1 \to G_{ev} \overset{i}{\to} (G \otimes H)_{ev} \overset{\pi}{\to} H_b \to 1.
\]
\end{proposition}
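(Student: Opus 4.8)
The plan is to write down the evident candidates for $i$ and $\pi$ and verify that the two hypotheses in the statement are exactly what makes the sequence exact, both algebraically and topologically. First I would set $i\colon G \to G\otimes H$, $i(g) = g\otimes 1$, and $\pi\colon G\otimes H \to H_b$, $\pi(g\otimes h) = [h]$, the class of $h$ modulo $\langle(-1)^F_H\rangle$. Both descend from the continuous maps $g\mapsto(g,1)$ and $(g,h)\mapsto[h]$ on $G\times H$, so both are continuous; $\pi$ is well defined on the quotient precisely because we divide by $(-1)^F_H$ downstairs, since the representatives $(g,h)$ and $((-1)^F_G g,(-1)^F_H h)$ of a point of $G\otimes H$ have the same image $[h]$. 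That $i$ is a homomorphism is immediate from $\theta_H(1)=0$, which kills the cocycle factor $((-1)^F_G)^{\theta_G(g_2)\theta_H(h_1)}$ in the product formula; that $\pi$ is a homomorphism holds because this cocycle factor lives in the first tensor factor and so maps trivially to $H$, hence to $H_b$.

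Next I would check exactness. Surjectivity of $\pi$ is clear from $\pi(1\otimes h)=[h]$. The hypothesis $(-1)^F_H\neq 1$ enters twice. It forces $\langle((-1)^F_G,(-1)^F_H)\rangle$ to have order two, so the fibre of $G\times H\to G\otimes H$ over $g\otimes 1$ is $\{(g,1),((-1)^F_G g,(-1)^F_H)\}$; the second point is never of the form $(g',1)$, so $g\otimes 1 = g'\otimes 1$ forces $g=g'$, giving injectivity of $i$ (and uniqueness of the representative of an element of $\operatorname{im} i$). For exactness at the middle, $\pi\circ i = 1$ is clear, and conversely $\pi(g\otimes h)=1$ means $h\in\{1,(-1)^F_H\}$: if $h=1$ then $g\otimes h = i(g)$, while if $h=(-1)^F_H$ the identity $1\otimes(-1)^F_H = (-1)^F_G\otimes 1$ gives $g\otimes(-1)^F_H = ((-1)^F_G g)\otimes 1 = i((-1)^F_G g)$. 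Hence $\ker\pi=\operatorname{im} i$. For the topological assertion I would note that the central $\Z_2$ acts freely on $G\times H$, so $G\times H\to G\otimes H$ is a double cover; this makes $i$ an open and closed embedding and $\pi$ an open quotient map, which is what an exact sequence of topological groups asks for.

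For the restricted sequence I would use the description $(G\otimes H)_{ev} = \{\,g\otimes h : \theta_G(g)=\theta_H(h)\,\}$, which is well defined since both fermion parities have $\theta=0$. Then $i$ carries $G_{ev}$ into $(G\otimes H)_{ev}$, and, since the representative $g$ of $i(g)$ is unique, $i(g)\in(G\otimes H)_{ev}$ iff $\theta_G(g)=0$, so $\operatorname{im} i\cap(G\otimes H)_{ev} = i(G_{ev})$; combined with the exactness already proved, this yields injectivity of $i|_{G_{ev}}$ and exactness at the middle of the restricted sequence for free. The only genuinely new point — and the sole place where nontriviality of $\theta_G$ is used — is surjectivity of $\pi$ on $(G\otimes H)_{ev}$: fixing $g_0\in G$ with $\theta_G(g_0)=1$, for any $[h]\in H_b$ the element $g_0^{\theta_H(h)}\otimes h$ lies in $(G\otimes H)_{ev}$ and maps to $[h]$.

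I do not expect a serious obstacle here; the proof is essentially bookkeeping with the equivalence relation defining $G\otimes H$. The points requiring the most care are organizational rather than conceptual: making sure the factor $(-1)^F_H\neq 1$ is invoked exactly where injectivity of $i$ and exactness at the middle would otherwise fail, making sure $\theta_G\neq 0$ is invoked exactly where surjectivity of the restricted $\pi$ would otherwise fail, and spelling out (at the level of rigor of the rest of the paper) why the double-cover structure of $G\times H\to G\otimes H$ upgrades the algebraic exact sequence to an exact sequence of topological groups.
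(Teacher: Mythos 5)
Your proof is correct and follows essentially the same route as the paper: define the evident maps, verify well-definedness and the homomorphism property against the cocycle in the product formula, check exactness pointwise, and use an odd element of $G$ for surjectivity of the restricted $\pi$. If anything, you are slightly more careful than the paper on two points it glosses over — that injectivity of $i$ genuinely needs $(-1)^F_H \neq 1$, and that the double-cover structure of $G \times H \to G \otimes H$ upgrades the algebraic statement to one of topological groups.
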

\begin{proof}
The maps are defined in the obvious way. 
Since $(g_1 \otimes 1)(g_2 \otimes 1) = (g_1g_2 \otimes 1)$, the map $i$ is an injective homomorphism.
$\pi$ is well-defined, since 
\[
\pi((-1)^F g \otimes (-1)^F h) = [(-1)^F h] = [h] = \pi(g \otimes h)
\]
by design.
It is a homomorphism, since the product
\[
(g_1 \otimes h_1)(g_2 \otimes h_2) = ((-1)^F)^{\theta(h_1)\theta(g_2)} g_1 g_2 \otimes h_1 h_2
\]
is mapped to the class of $h_1 h_2$ in $\frac{H}{\langle (-1)^F \rangle}$.
It is clearly surjective.

To show that $\ker \pi = \operatorname{Im} i$, note first that $\pi i = 1$.
Now suppose $g \otimes h \in \ker \pi$. Then $h = 1$ or $h = (-1)^F$.
If $h = (-1)^F$, then 
\[
i((-1)^F g) = (-1)^F g \otimes 1 = g \otimes (-1)^F = g \otimes h
\]
and otherwise $i(g) = g \otimes h$.
Hence $g \otimes h \in \operatorname{Im}i$.
This finishes the proof that the first sequence is exact.

For the second sequence, note that the restriction of $i$ to the even parts is still well-defined because it intertwines the gradings.
It is also still injective and the composition with $\pi$ is trivial.
Now suppose that $g \otimes h \in \ker \pi$ is even. Then $h \in \{1,(-1)^F\}$ is even and therefore so is $g$.
So we still have $\ker \pi = \operatorname{Im} i$.
To show that $\pi$ is still surjective if $G$ is nontrivially graded, pick $[h] \in H/\langle (-1)^F \rangle$. 
If $h$ is even, then so is $1 \otimes h$ and $\pi(1 \otimes h) = h$.
If $h$ is odd, pick an odd element $g \in G$. Then $g \otimes h$ is even and $\pi(g \otimes h) = h$.
This proves the second sequence.
\end{proof}

\begin{corollary}
\label{cor}
If $G$ is a fermionic group with $(-1)^F \neq 1$, then
\[
1 \to \Spin(d) \to H_d(G) \to G_b \to 1
\]
is exact.
\end{corollary}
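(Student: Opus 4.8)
The plan is to obtain this as an immediate specialization of Proposition \ref{prop: exact}. Since by definition $H_d(G) = (\Pin^+(d) \otimes G)_{ev}$, I would apply that proposition with its two fermionic groups ``$G$'' and ``$H$'' taken to be $\Pin^+(d)$ and $G$ respectively. The proposition's hypothesis that $(-1)^F \neq 1$ in ``$H$'' is then exactly the hypothesis of the corollary, and the extra hypothesis of its second clause --- that ``$G$'' carries a nontrivial grading homomorphism --- holds for $\Pin^+(d)$ with $d \geq 1$, because $\theta \colon \Pin^+(d) \to \Z_2$ is precisely the map detecting the two path components of $\Pin^+(d)$. Hence the second exact sequence in Proposition \ref{prop: exact} reads
\[
1 \to \Pin^+(d)_{ev} \to (\Pin^+(d) \otimes G)_{ev} \to G_b \to 1 .
\]

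What is left is to identify the three terms with those in the statement. The middle term equals $H_d(G)$ by definition. For the left term I would note that $\Pin^+(d)_{ev} = \ker(\theta \colon \Pin^+(d) \to \Z_2)$ is exactly the subgroup of elements lying over $SO(d) \subseteq O(d)$ under the double cover $\Pin^+(d) \to O(d)$, i.e. the standard copy of $\Spin(d)$ inside $\Pin^+(d)$. I would also record that the maps coincide with the expected ones: tracing through the proof of Proposition \ref{prop: exact}, the injection is $r \mapsto r \otimes 1$, whose composite with the structure map $\rho \colon H_d(G) \to O(d)$ (projection onto the first tensor factor) is the usual double cover $\Spin(d) \to SO(d) \hookrightarrow O(d)$, while the surjection $H_d(G) \to G_b$ is induced by projection onto the second tensor factor.

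I do not anticipate a real obstacle: this is essentially bookkeeping. The two points that need a moment's attention are the identification $\Pin^+(d)_{ev} \cong \Spin(d)$ --- this is where the $\Pin^+$ versus $\Pin^-$ choice and the ``even part'' convention must be matched carefully --- and the degenerate case $d = 0$, where $\Pin^+(0)$ is connected and hence has trivial grading, so the second clause of Proposition \ref{prop: exact} does not apply as stated; since every use of the corollary in this paper has $d \geq 1$, I would simply assert it in that range (and observe that for $d = 0$ one has $\Spin(0) = 1$, so the statement there is separate and trivial).
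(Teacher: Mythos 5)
Your proposal is correct and is exactly the paper's argument: the paper's proof is the single line ``apply Proposition \ref{prop: exact} with the first factor taken to be $\Pin^+(d)$,'' and you have simply filled in the bookkeeping (the identification $\Pin^+(d)_{ev}\cong\Spin(d)$ and the check that the grading on $\Pin^+(d)$ is nontrivial for $d\geq 1$) that the paper leaves implicit. Your remark about the degenerate case $d=0$ is a fair observation but does not change the substance.
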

\begin{proof}
Apply proposition \ref{prop: exact} to the case where $G = \Pin^+(d)$.
\end{proof}

\begin{proposition}
\label{prop: homotopy pullback}
Suppose $(-1)^F \neq 1 \in G$.
Let $\nu: B G_b \to B^2 \Z_2^F$ be the map corresponding to the extension
\[
1 \to \Z_2^F \to G \to G_b \to 1.
\]
Then the diagram

\[
\begin{tikzcd}
BH(G)  \arrow[r] \arrow[d]
& B G_b \arrow[d, "{(\theta,\nu)}"]
\\
BO \arrow[r,"{(w_1,w_2+w_1^2)}"] 
& B \Z_2^T \times B^2 \Z_2^F
\end{tikzcd}
\]

is a homotopy pullback square.
\end{proposition}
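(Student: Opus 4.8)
The plan is to show the square is a homotopy pullback by exhibiting both $BH(G)$ and the homotopy pullback $X$ of $(\theta,\nu)$ and $(w_1,w_2+w_1^2)$ as the homotopy fibre of one and the same map out of the classifying space of an auxiliary fibre product of groups. First I would combine the structure map $\rho\colon H(G)\to O$ with the projection $\pi\colon H(G)=(\Pin^+\otimes G)_{ev}\to G_b$ of Proposition~\ref{prop: exact} (applied with $\Pin^+$ and the present $G$ in the roles of its $G$ and $H$, which is allowed since $(-1)^F\neq1$ in $G$). Every element $g\otimes h$ of $H(G)$ is even, so $w_1(\rho(g\otimes h))=\theta(g)=\theta(h)=\theta(\pi(g\otimes h))$; hence $(\rho,\pi)$ has image the fibre product $P:=O\times_{\Z_2}G_b$ formed along $w_1$ and $\theta$, and a direct check (as in the proof of Corollary~\ref{cor}) shows its kernel is exactly $\Z_2^F$. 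This gives a central extension $1\to\Z_2^F\to H(G)\to P\to1$ with classifying class $\mu\in H^2(BP;\Z_2^F)$, so that $BH(G)$ is the homotopy fibre of $\mu\colon BP\to B^2\Z_2^F$; and since $w_1\colon O\to\Z_2$ is onto, the canonical map $BP\to BO\times^{h}_{B\Z_2^T}BG_b$ is an equivalence.

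Next I would observe that, since $B^2\Z_2^F$ is an infinite loop space, the homotopy pullback $X$ can be formed in two stages: pulling back over the $B\Z_2^T$ factor first produces exactly $BO\times^{h}_{B\Z_2^T}BG_b\simeq BP$, and $X$ is then the homotopy fibre of the residual difference map $BP\to B^2\Z_2^F$, which (using that $2=0$ in $\Z_2$-cohomology) is $p_1^*(w_2+w_1^2)+p_2^*\nu$, where $p_1,p_2$ are the projections of $P$ onto $O$ and $G_b$. Thus the entire statement reduces to the identity
\[
\mu=p_1^*(w_2+w_1^2)+p_2^*\nu\in H^2(BP;\Z_2).
\]

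The hard part will be this identity, which I would prove by a cocycle computation. Choose grading-compatible set sections $s_O\colon O\to\Pin^+$ and $s_G\colon G_b\to G$, i.e.\ with $\theta(s_O(a))=w_1(a)$ and $\theta(s_G(b))=\theta(b)$, and let $\alpha,\beta$ be the resulting $\Z_2$-valued $2$-cocycles of the extensions $\Pin^+\to O$ and $G\to G_b$; then $[\alpha]=w_2$ (the $\Pin^+$ obstruction, in this paper's conventions) and $[\beta]=\nu$. The assignment $(a,b)\mapsto s_O(a)\otimes s_G(b)$ is a set section of $H(G)\to P$, and when two such elements are multiplied inside $H(G)$ the Koszul prefactor $((-1)^F)^{\theta(g_2)\theta(h_1)}$ of the fermionic tensor product contributes the exponent $w_1(a_2)\,\theta(b_1)$, which equals $w_1(a_1)w_1(a_2)$ on $P$. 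Hence the cocycle representing $\mu$ is $w_1(a_1)w_1(a_2)+\alpha(a_1,a_2)+\beta(b_1,b_2)$, whose class is $p_1^*(w_1^2)+p_1^*(w_2)+p_2^*\nu=p_1^*(w_2+w_1^2)+p_2^*\nu$, as wanted. The delicate point is exactly the book-keeping of the fermionic tensor product: one must see that the Koszul sign alone is what upgrades the $\Pin^+$-class $w_2$ to $w_2+w_1^2$ (the algebraic shadow of \emph{Wick rotation interchanging $\Pin^+$ and $\Pin^-$}), and that the correction is precisely the cup-square $w_1^2$ pulled back from the orthogonal factor.

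Finally I would conclude: by the preceding steps $BH(G)$ and $X$ are both the homotopy fibre of the single map $\mu\colon BP\to B^2\Z_2^F$, and since the identification of a central extension's classifying space with this homotopy fibre is natural, the resulting equivalence $BH(G)\simeq X$ is compatible with the two projections to $BO$ and $BG_b$ — that is, it is the canonical comparison map of the square — so the square is a homotopy pullback. As a consistency check, taking the homotopy fibre of $(w_1,w_2+w_1^2)\colon BO\to B\Z_2^T\times B^2\Z_2^F$ over a point recovers $B\Spin$, matching the fibre $B\Spin$ of $BH(G)\to BG_b$ provided by Corollary~\ref{cor}.
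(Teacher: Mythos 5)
Your proposal is correct, and the decisive computation — the Koszul prefactor $((-1)^F)^{\theta(g_2)\theta(h_1)}$ of the fermionic tensor product contributing exactly the cup-square $w_1^2$ that turns the $\Pin^+$ class $w_2$ into the $\Pin^-$ class $w_2+w_1^2$ — is the same one the paper performs. But you package the argument differently. The paper proves homotopy-commutativity by comparing two $\Z_2$-extensions pulled back to $H_d(G)$ itself (showing $H_d(G)\times_{G_b}G$ carries a product making it a copy of $\Pin^-(d)$ over $O_d$), and then gets the pullback property separately by observing via Corollary \ref{cor} that both horizontal maps have homotopy fibre $B\Spin$. You instead introduce the fibre product $P=O\times_{\Z_2}G_b$, exhibit $1\to\Z_2^F\to H(G)\to P\to 1$ as a central extension with class $\mu$, identify the homotopy pullback $X$ as the homotopy fibre of $p_1^*(w_2+w_1^2)+p_2^*\nu$ over $BP$, and reduce the entire proposition to the single identity $\mu=p_1^*(w_2+w_1^2)+p_2^*\nu$ in $H^2(BP;\Z_2)$, verified by an explicit cocycle. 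Your route is more quantitative — it names the extension class of $H(G)\to P$ and makes completely transparent where the $w_1^2$ correction comes from, and it subsumes commutativity and the pullback property in one step — whereas the paper's route stays entirely with group extensions and fibre sequences and so never has to choose set-theoretic sections of the topological groups $O$ and $G_b$. That last point is the one place you should be a little careful: for topological groups a global continuous section of $\Pin^+\to O$ does not exist, so the bar-cocycle manipulation should be justified either simplicially (Segal--Mitchison-type cohomology with locally continuous cochains) or, equivalently, by phrasing your identity as the statement that the extension $H(G)\to P$ is the Baer sum of the pullbacks of $\Pin^-\to O$ and $G\to G_b$ — which is exactly what your formula for the twisted product exhibits, and which brings your argument back in contact with the paper's isomorphism of extensions. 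Your final consistency check (both fibres over $BG_b$, respectively over the point, are $B\Spin$) is precisely the paper's second step, so the two proofs reassuringly meet there.
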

\begin{proof}
We first show the square is commutative. 
The two maps to $B\Z_2^T$ agree because they come from the maps of groups $x \otimes g \mapsto \theta(x), \theta(g)$ which are equal because for $H_d(G)$ we only take the even part of the tensor product.
To show that the maps to $B^2 \Z_2^F$ agree, it is sufficient to prove that the pullback of group extensions

\[
\begin{tikzcd}
1 \arrow[r] & \Z_2^F \arrow[r] \arrow[d,equal]& K_d \arrow[r] \arrow[d] & H_d(G) \arrow[r] \arrow[d]& 1
\\
1 \arrow[r] & \Z_2^F \arrow[r] & \operatorname{Pin}^-(d) \arrow[r] & O_d \arrow[r] & 1
\end{tikzcd}
\]

is isomorphic as a group extension to the pullback of extensions

\[
\begin{tikzcd}
1 \arrow[r] & \Z_2^F \arrow[r] \arrow[d,equal]& K_d' \arrow[r] \arrow[d] & H_d(G) \arrow[r] \arrow[d]& 1
\\
1 \arrow[r] & \Z_2^F \arrow[r] & G \arrow[r] & G_b \arrow[r] & 1
\end{tikzcd}
\]

We will use the model $K_d' = H_d(G) \times_{G_b} G$ and show that it fits in the first pullback square.
There is a homomorphism $K_d' \to \Pin^-(d)$ defined by $((x \otimes g_1) \times_{G_b} g_2) \mapsto x$.
Indeed, we compute the product 
\begin{align*}
    ((x \otimes g_1) \times_{G_b} g_2) \cdot ((x' \otimes g_1') \times_{G_b} g_2') = (-1)^{\theta(g_1) \theta(x')} (x x' \otimes g_1 g_1') \times_{G_b} g_2 g_2'
\end{align*}
and note that since $\theta(g_1) = \theta(x)$ this element is mapped to $(-1)^{\theta(x) \theta(x')} x x'$.
This defines an associative product on $\Pin^+(d)$ through which it becomes isomorphic to $\Pin^-(d)$.
The resulting map to $O_d$ agrees with the composition $G_d \to H_d \to O_d$.
So the square is commutative.
To show it is a homotopy pullback square, we apply corollary \ref{cor}. 
Note that the homotopy fibers of both horizontal maps in the square are $B\Spin$ and the maps are compatible.
\end{proof}

We specialize the discussion to the case at hand and conclude the following.

\begin{corollary}
An $H$-structure on a manifold $M$ consists of an orientation, a two-dimensional real vector bundle $E \to M$ and a $\Pin^+$-structure on $E \oplus TM$.
\end{corollary}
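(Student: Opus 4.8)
The plan is to apply Proposition~\ref{prop: homotopy pullback} to the internal symmetry group $G = U(1)_Q \rtimes \Z_2^C$ and to unwind the resulting homotopy pullback square explicitly. Since $(-1)^F = -1 \rtimes 1 \neq 1$ the proposition applies; here $G_b = G/\Z_2^F = \bigl(U(1)/\{\pm1\}\bigr) \rtimes \Z_2 \cong U(1) \rtimes \Z_2 \cong O(2)$ is exactly the base of the fibration~\eqref{eq: ses}, so $BG_b \simeq BO(2)$ and a map $M \to BG_b$ is precisely a rank-two real vector bundle $E \to M$. Moreover the grading $\theta$ on $U(1)_Q \rtimes \Z_2^C$ is trivial, so in the square of Proposition~\ref{prop: homotopy pullback} the $B\Z_2^T$-component of the right-hand vertical map $(\theta,\nu)\colon BO(2) \to B\Z_2^T \times B^2\Z_2^F$ is the constant map, and $\nu \in H^2(BO(2);\Z_2)$ is the class of the central extension $1 \to \Z_2^F \to U(1)_Q \rtimes \Z_2^C \to O(2) \to 1$.

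The one genuine computation is $\nu = w_2$. I would obtain this by identifying $U(1)_Q \rtimes \Z_2^C \cong \Pin^+(2)$ as fermionic groups: under the double cover $\Spin(2) = U(1) \xrightarrow{z \mapsto z^2} U(1) = SO(2)$ the element $(-1)^F = e^{i\pi Q}$ corresponds to $-1$, the $U(1)_Q$-action matches the squaring map $z \mapsto z^2$ of~\eqref{eq: ses}, and $C$ (of square $+1$) corresponds to a lift of a reflection of square $+1$; since the extension class of $\Pin^+(d) \to O(d)$ is $w_2$, we get $\nu = w_2(E)$. Equivalently one argues directly: restricting the extension to $SO(2) \subset O(2)$ produces the nontrivial double cover $U(1) \xrightarrow{z \mapsto z^2} U(1)$, of class $w_2|_{BSO(2)} \neq 0$, while restricting to the reflection subgroup $\Z_2^C$ produces the split extension $\Z_2 \times \Z_2$ because $C^2 = 1$, of class $0 = w_2|_{B\Z_2}$; these two restrictions already distinguish $w_2$ from the other three elements $0, w_1^2, w_2 + w_1^2$ of $H^2(BO(2);\Z_2)$.

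It then remains to read the corollary off the square. A (tangential) $H$-structure on $M$ is a lift of the classifying map $\tau_M \colon M \to BO$ of the stable tangent bundle through $BH \to BO$, and by the homotopy pullback this is the same as the data of a map $M \to BO(2)$, i.e.\ a rank-two real bundle $E \to M$, together with a homotopy in $B\Z_2^T \times B^2\Z_2^F$ from $\bigl(w_1(TM),\, w_2(TM) + w_1(TM)^2\bigr)$ to $\bigl(\text{const},\, w_2(E)\bigr)$. Since the target is a product, this homotopy decomposes as a nullhomotopy of $w_1(TM)$ --- that is, an orientation of $M$ --- together with, given that orientation (so that $w_1(TM) = 0$), a homotopy $w_2(TM) \simeq w_2(E)$; the latter is a nullhomotopy of $w_2(TM) + w_2(E) = w_2(E \oplus TM)$ (the cross term $w_1(TM)w_1(E)$ vanishes), i.e.\ precisely a $\Pin^+$-structure on $E \oplus TM$.

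I expect the main obstacle to be the computation $\nu = w_2$ rather than $w_2 + w_1^2$, since the latter would instead produce a $\Pin^-$-structure; and, more pedantically, one must check that the two homotopies in the product genuinely assemble into ``orientation plus $\Pin^+$-structure on $E \oplus TM$'' rather than some twisted variant, using that $B\Spin$-type fibers split over a product Eilenberg--MacLane target. As a consistency check, when $E$ is trivial the recipe collapses to an orientation together with a $\Pin^+$-structure on the now-oriented bundle $TM$, which is a spin structure, matching $H_d(\Z_2^F) = \Spin(d)$.
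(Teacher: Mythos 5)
Your proof is correct and follows essentially the same route as the paper: apply Proposition \ref{prop: homotopy pullback}, reduce the square using $G_b \cong O(2)$ and the triviality of $\theta$, and read off an orientation plus a $\Pin^+$-structure on $E \oplus TM$ via the Whitney formula. Your explicit verification that $\nu = w_2$ (by restricting the extension to $SO(2)$ and to $\langle C\rangle$) is a worthwhile detail that the paper leaves implicit in writing the reduced vertical map as $(0,w_2)$; the only small imprecision is that $U(1)_Q \rtimes \Z_2^C \cong \Pin^+(2)$ holds as central extensions of $O(2)$ by $\Z_2$ but not as fermionic groups (the former has trivial grading $\theta$ while $\Pin^+(2)$ in the paper's convention does not), which does not affect the computation of the extension class.
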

\begin{proof}
Given a manifold with stable tangent bundle $TM: M \to BO$, an $H$-structure consists of a map $M \to BH(G)$ and a homotopy between the composition $M \to BH(G) \to BO$ and the tangent bundle.
The homotopy pullback square of \ref{prop: homotopy pullback} reduces to

\[
\begin{tikzcd}
BH(G)  \arrow[r] \arrow[d]
& B O(2) \arrow[d, "{(0,w_2)}"]
\\
BO \arrow[r,"{(w_1,w_2+w_1^2)}"] 
& B \Z_2^T \times B^2 \Z_2^F
\end{tikzcd}
\]

telling us that this is equivalent to a two-dimensional real vector bundle $E: M \to BO(2)$ together with a homotopy between the the two maps $M \to B \Z_2^T \times B^2 \Z_2^F$.
The two maps to $B \Z_2^T$ are the determinant line bundle $\bigwedge^{top} TM: M \to B \Z_2^T$ and the trivial map, so this gives an orientation on $M$.
The two maps to $B^2 \Z_2^F$ correspond to the cohomology classes $w_1(M)^2 + w_2(M) = w_2(M)$ and $w_2(E)$.
Since $w_2(E \oplus M) = w_2(E) + w_2(M)$, a homotopy between these maps $M \to B^2 \Z_2^F$ is a $\Pin^+$-structure on $E \oplus TM$.
\end{proof}

Recall that the three-dimensional $\Z_2$-phase can be realized as the invertible topological field theory with partition function
\[
Z(M,E) = (-1)^{\int_M w_1(E)^3}.
\]
In physics language, if our theory has both charge and particle-hole symmetry, we have to couple the $U(1)_Q \rtimes \Z_2^C$ to a background gauge field and this purely topological partition function gives some information about which instanton sector spacetime is in.

An example of a three-dimensional $H$-manifold on which the partition function is nontrivial is $\R \mathbb{P}^3$ together with $E = \gamma \oplus \underline{\R}$, where $\gamma \to \R \mathbb{P}^3$ is the canonical line bundle.
Indeed, $\R \mathbb{P}^3$ is orientable, $w_2(\R \mathbb{P}^3) = 0$ and $w_2(\gamma \oplus \underline{\R}) = w_2(\gamma) = 0$, so that $E \oplus T \R \mathbb{P}^3$ admits a spin structure.
Using any of these we have
\[
\int_{\R \mathbb{P}^3} w_1(E)^3 = \int_{\R \mathbb{P}^3} w_1(\gamma)^3 \neq 0 \in \Z_2,
\]
since $w_1(\gamma)^3$ is nontrivial in the top cohomology of $\R \mathbb{P}^3$.
So the theory is nontrivial on the spacetime $(\R \mathbb{P}^3,E)$ with any choice of orientation and $\Pin^+$-structure.
Note that the trivial bundle $E$ over $\R \mathbb{P}^3$ would make the partition function equal to $1$ instead.
Therefore, this theory depends on the choice of instanton sector.
In particular, the theory would not be well-defined if it did not have particle-hole and charge symmetry.

\bibliography{biblio.bib}{}
\bibliographystyle{plain}

\subsection{Conflict of Interest}

The author states that there is no conflict of interest.

\subsection{Data Availability}

Data sharing not applicable to this article as no datasets were generated or analysed
during the current study

\end{document}